\def\margin{2cm}
\title{Rainbow Colouring of\\ Split and Threshold Graphs}
\author{L. Sunil Chandran}
\author{Deepak Rajendraprasad}
\affil
{
	Department of Computer Science and Automation, \authorcr 
	Indian Institute of Science, \authorcr
	Bangalore -560012, India. \authorcr
	\{sunil, deepakr\}@csa.iisc.ernet.in
}
\theoremstyle{definition}
\newtheorem{definition}{Definition}
\theoremstyle{plain}
\newtheorem{theorem}{Theorem}
\newtheorem{corollary}[theorem]{Corollary}
\newtheorem{observation}[theorem]{Observation}
\theoremstyle{remark}
\def\into{\rightarrow}
\def\is{\leftarrow}
\def\-{\mbox{--}}
\def\R{\mathbb{R}}
\def\Z{\mathbb{Z}}
\def\N{\mathbb{N}}
\newcommand{\edge}[1]{\stackrel{#1}{\mbox{---\negthinspace---}}}
\begin{document}

\maketitle

\begin{abstract}

A {\em rainbow colouring} of a connected graph is a colouring of the edges of the graph, such that every pair of vertices is connected by at least one path in which no two edges are coloured the same. Such a colouring using minimum possible number of colours is called an {\em optimal rainbow colouring}, and the minimum number of colours required is called the {\em rainbow connection number} of the graph. A {\em Chordal Graph} is a graph in which every cycle of length more than $3$ has a chord. A {\em Split Graph} is a chordal graph whose vertices can be partitioned into a clique and an independent set. A {\em threshold graph} is a split graph in which the neighbourhoods of the independent set vertices form a linear order under set inclusion. In this article, we show the following:
\begin{enumerate}
\item The problem of deciding whether a graph can be rainbow coloured using $3$ colours remains NP-complete even when restricted to the class of split graphs. However, any split graph can be rainbow coloured in linear time using at most one more colour than the optimum.
\item For every integer $k \geq 3$, the problem of deciding whether a graph can be rainbow coloured using $k$ colours remains NP-complete even when restricted to the class of chordal graphs. 
\item For every positive integer $k$, threshold graphs with rainbow connection number $k$ can be characterised based on their degree sequence alone. Further, we can optimally rainbow colour a threshold graph in linear time.
\end{enumerate}

\end{abstract}

\noindent {\bf Keywords:} rainbow connectivity, rainbow colouring, threshold graphs, split graphs, chordal graphs, degree sequence, approximation, complexity.


\section{Introduction}

Connectivity is one of the basic concepts of graph theory. It plays a fundamental role both in theoretical studies and in applications. When a network (transport, communication, social, etc) is modelled as a graph, connectivity gives a way of quantifying its robustness. This may be the reason why connectivity is possibly the problem that has been studied on the largest variety of computational models \cite{wigderson1992connecivity}. Due to the diverse application requirements and manifold theoretical interests, many variants of the connectivity problem have been studied. One typical case is when there are different possible types of connections (edges) between nodes and additional restrictions on connectivity based on the types of edges that can used in a path. In this case we can model the network as an edge-coloured graph. One natural restriction to impose on connectivity is that any two nodes should be connected by a path in which no edge of the same type (colour) occurs more than once. This is precisely the property called {\em rainbow connectivity}. Such a restriction for the paths can arise, for instance, in routing packets in a cellular network with transceivers that can operate in multiple frequency bands or in routing secret messages between security agencies using different handshaking passwords in different links \cite{li2011rainsurvey} \cite{chakraborty2011hardness}. The problem was formalised in graph theoretic terms by Chartrand et al. \cite{chartrand2008rainbow} in 2008.

An {\em edge colouring} of a graph is a function from its edge set to the set of natural numbers. A path in an edge coloured graph with no two edges sharing the same colour is called a {\em rainbow path}. An edge coloured graph is said to be {\em rainbow connected} if every pair of vertices is connected by at least one rainbow path. Such a colouring is called a {\em rainbow colouring} of the graph. A rainbow colouring using minimum possible number of colours is called {\em optimal}. The minimum number of colours required to rainbow colour a connected graph is called its {\em rainbow connection number}, denoted by $rc(G)$. For example, the rainbow connection number of a complete graph is $1$, that of a path is its length, that of an even cycle is its diameter, that of an odd cycle of length at least $5$ is one more than its diameter, and that of a tree is its number of edges. Note that disconnected graphs cannot be rainbow coloured and hence the rainbow connection number for them is left undefined. Any connected graph can be rainbow coloured by giving distinct colours to the edges of a spanning tree of the graph. Hence the rainbow connection number of any connected graph is less than its number of vertices.

While formalising the concept of rainbow colouring, Chartrand et al. also determined the precise values of rainbow connection number for some special graphs \cite{chartrand2008rainbow}. Subsequently, there have been various investigations towards finding good upper bounds for rainbow connection number in terms of other graph parameters \cite{caro2008rainbow} \cite{schiermeyer2009rainbow} \cite{krivelevich2010rainbow} \cite{chandran2011raindom} \cite{basavaraju2010radius} and for many special graph classes \cite{li2011linegraphs} \cite{chandran2011raindom} \cite{basavaraju2010radius} \cite{basavaraju2011products}. Behaviour of rainbow connection number in random graphs is also well studied \cite{caro2008rainbow} \cite{he2010rainthreshold} \cite{shang2011randombipartite} \cite{frieze2012rainbow}. A basic introduction to the topic can be found in Chapter $11$ of the book {\em Chromatic Graph Theory} by Chartrand and Zhang \cite{chartrand2008chromatic} and a survey of most of the recent results in the area can be found in the article by Li and Sun \cite{li2011rainsurvey} and also in their forthcoming book {\em Rainbow Connection of Graphs} \cite{li2012rainbowbook}.

On the computational side, the problem has received relatively less attention. It was shown by Chakraborty et al. that computing the rainbow connection number of an arbitrary graph is NP-Hard \cite{chakraborty2011hardness}. In particular, it was shown that the problem of deciding whether a graph can be rainbow coloured using $2$ colours is NP-complete. Later, Ananth et al. \cite{ananth2011fstrcs} complemented the result of Chakraborty et al., and now we know that for every integer $k \geq 2$, it is NP-complete to decide whether a given graph can be rainbow coloured using $k$ colours. Chakraborty et al., in the same article, also showed that deciding whether a given edge coloured graph is rainbow connected is NP-complete. It was then shown by Li and Li that this problem remains NP-complete even when restricted to the class of bipartite graphs \cite{li2011note}. 

On the positive side, Basavaraju et al. have demonstrated an $O(nm)$-time $(r+3)$-factor approximation algorithm for rainbow colouring any graph with radius $r$ \cite{basavaraju2010radius}. Constant factor approximation algorithms for rainbow colouring Cartesian, strong and lexicographic products of non-trivial graphs are reported in \cite{basavaraju2011products}. Constant factor approximation algorithms for bridgeless chordal graphs, and additive approximation algorithms for interval, AT-free, threshold and circular arc graphs without pendant vertices will follow from the proofs of their upper bounds \cite{chandran2011raindom}.  To the best of our knowledge, no efficient optimal rainbow colouring algorithm has been reported for any non-trivial subclass of graphs.

\subsection{Our Results}

In this article we consider the problem of rainbow colouring split graphs and a particular subclass of split graphs called threshold graphs (Definition \ref{defClasses}). We show the following results.

\begin{itemize}
\item[1.] The problem of deciding whether a graph can be rainbow coloured using $3$ colours remains NP-complete even when restricted to the class of split graphs (Corollary \ref{corSplitHardness}). Any split graph can be rainbow coloured in linear time using at most one more colour than the optimum  (Algorithm \ref{algColourSplitGraph}).
\end{itemize}

This is similar to the problem of finding the chromatic index of a graph. Though every graph with maximum degree $\Delta$ can be properly edge-coloured in $O(nm)$ time using $\Delta + 1$ colours using a constructive proof of Vizing's Theorem \cite{misra1992vizing},  it is NP-hard to decide whether the graph can be coloured using $\Delta$ colours \cite{holyer1981edge}.

No two pendant edges (Definition \ref{defDegree}) can share the same colour in any rainbow colouring of a graph (Observation \ref{obsPendant}). The $+1$-approximation algorithm above is obtained by carefully reusing the same colours on most of the remaining edges of the graph. The hardness result is obtained by demonstrating a reduction from the problem of $3$-colourability of $3$-uniform hypergraphs. In fact, the technique in the reduction can be extended to show the following  result for chordal graphs.

\begin{itemize}
\item[2.] For every integer $k \geq 3$, the problem of deciding whether a graph can be rainbow coloured using $k$ colours remains NP-complete even when restricted to the class of chordal graphs (Theorem \ref{thmHyperGraphToChordal}). 
\end{itemize}

Though a similar hardness result is known for deciding the rainbow connection number of general graphs, the above strengthening to chordal graphs is interesting since, unlike for general graphs, a constant factor approximation algorithm is already known for rainbow colouring  chordal graphs. Chandran et al. \cite{chandran2011raindom} have shown that any bridgeless chordal graph can be rainbow coloured using at most $3r$ colours, where $r$ is the radius of the graph. The proof given there is constructive and can be easily extended to a polynomial-time algorithm which will colour any chordal graph $G$ with $b$ bridges and radius $r$ using at most $3r + b$ colours. Since $\max\{r, b\}$ is easily seen to be a lower bound for $rc(G)$, this immediately gives us a $4$-factor approximation algorithm. 

\begin{itemize}
\item[3.] For every positive integer $k$, threshold graphs with rainbow connection number exactly $k$ can be characterised based on their degree sequence (Definition \ref{defDegree}) alone (Corollary \ref{corThresholdChar}). Further, we can optimally rainbow colour a threshold graph in linear time (Algorithm \ref{algColourThresholdGraph}).
\end{itemize}

In particular we show that if $d_1 \geq \cdots \geq d_n$ is the degree sequence of an $n$-vertex threshold graph $G$, then
\begin{equation}
	rc(G) =
	\begin{cases}
		1, 				& d_n = n-1 \\
		2, 				& d_n < n-1 \textnormal{ and } \sum_{i = k}^{n}2^{-d_i} \leq 1 \\
		\max\{3, p\}, 	& \textnormal{otherwise} 
	\end{cases}
\end{equation} 
where $k = \min\{i : 1 \leq  i \leq n, \, d_i \leq i-1 \}$ and $p = |\{i : 1 \leq i \leq n, \, d_i =1 \} |$. 

Both the characterisation and the algorithm are obtained by connecting the problem of rainbow colouring a threshold graph to that of generating a prefix-free binary code.


\subsection{Preliminaries}

All graphs considered in this article are finite, simple and undirected. For a graph $G$, we use $V(G)$ and $E(G)$ to denote its vertex set and edge set respectively. Unless mentioned otherwise, $n$ and $m$ will respectively denote the number of vertices and edges of the graph in consideration. The shorthand $[n]$ denotes the set $\{1, \ldots, n\}$. The cardinality of a set $S$ is denoted by $|S|$.

\begin{definition}
Let $G$ be a connected graph. The {\em length} of a path is its number of edges. The {\em distance} between two vertices $u$ and $v$ in $G$, denoted by $d(u,v)$ is the length of a shortest path between them in $G$. The {\em eccentricity} of a vertex $v$ is $ecc(v) := \max_{x \in V(G)}{d(v, x)}$. The {\em diameter} of $G$ is $diam(G) := \max_{x \in V(G)}{ecc(x)}$ and {\em radius} of $G$ is $radius(G) := \min_{x \in V(G)}{ecc(x)}$.
\end{definition}

\begin{definition} \label{defDegree}
The {\em neighbourhood} $N(v)$ of a vertex $v$ is the set of vertices adjacent to $v$ but not including $v$.  The {\em degree} of a vertex $v$ is $d_v := |N(v)|$. The {\em degree sequence} of a graph is the non-increasing sequence of its vertex degrees. A vertex is called {\em pendant} if its degree is $1$. An edge incident on a pendant vertex is called a {\em pendant edge}.
\end{definition}

\begin{definition} \label{defClasses}
A graph $G$ is called {\em chordal}, if there is no induced cycle of length greater than $3$. A graph $G$ is a {\em split graph}, if $V(G)$ can be partitioned into a clique and an independent set. A graph $G$ is a {\em threshold graph}, if there exists a weight function $w:V(G) \into \R$ and a real constant $t$ such that two vertices $u, v \in V(G)$ are adjacent if and only if $w(u) + w(v) \geq t$.
\end{definition}

Before getting into the main results, we note two elementary and well known observations on rainbow colouring whose proofs we omit. 

\begin{observation}
\label{obsDiameter}
For every connected graph $G$, we have $rc(G) \geq diam(G)$.
\end{observation}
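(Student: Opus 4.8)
The plan is to exhibit, inside any optimal rainbow colouring, a single rainbow path whose length already forces at least $diam(G)$ colours. First I would fix a rainbow colouring of $G$ using exactly $rc(G)$ colours, which exists by the definition of the rainbow connection number. Next I would choose a pair of vertices $u, v$ realising the diameter, that is, with $d(u, v) = diam(G)$; such a pair exists because the diameter is the maximum over all pairwise distances.

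By the definition of rainbow connectivity, this fixed colouring connects $u$ and $v$ by at least one rainbow path $P$. The key observation is twofold. On one hand, since $P$ is a $u$--$v$ path and $d(u, v)$ is the length of a \emph{shortest} such path, the length of $P$ is at least $d(u, v) = diam(G)$. On the other hand, since $P$ is rainbow, its edges carry pairwise distinct colours, so the number of distinct colours appearing along $P$ equals its length. Combining these two facts, the colouring uses at least $diam(G)$ distinct colours on $P$ alone, and hence $rc(G) \geq diam(G)$.

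I expect no genuine obstacle here: the argument is a direct unwinding of the definitions of rainbow connection number, rainbow path, and distance. The only point requiring a moment's care is the step identifying the number of colours appearing on a rainbow path with its length, which relies precisely on the no-repeated-colour property that defines a rainbow path.
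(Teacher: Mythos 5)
Your proof is correct and is exactly the standard argument: a rainbow path joining a diametral pair $u,v$ has length at least $d(u,v)=diam(G)$, and being rainbow it carries that many distinct colours, forcing $rc(G)\geq diam(G)$. The paper omits the proof of this observation, describing it as elementary and well known, and your argument supplies precisely the intended justification.
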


\begin{observation}
\label{obsPendant}
If $u$ and $v$ are two pendant vertices in a connected graph $G$, then their incident edges get different colours in any rainbow colouring of $G$. In particular, if $G$ has $p$ pendant vertices, then $rc(G) \geq p$. 
\end{observation}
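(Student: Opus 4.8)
The plan is to exploit the rigidity that a degree-one vertex forces on every path leaving it. Let $u$ and $v$ be two distinct pendant vertices, and let $e_u$ and $e_v$ denote their unique incident edges. The key observation is that \emph{any} path in $G$ with an endpoint at $u$ must begin with the edge $e_u$, simply because $e_u$ is the only edge incident on $u$; symmetrically, every path ending at $v$ must finish with $e_v$. Hence any $u$--$v$ path whatsoever, and in particular any rainbow $u$--$v$ path (which exists by the definition of a rainbow colouring), contains both $e_u$ and $e_v$.

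First I would dispose of the single degenerate situation in which $e_u$ and $e_v$ coincide. This can happen only if $u$ and $v$ are adjacent, and since both have degree $1$, the edge $uv$ would be the sole edge at each of them; connectivity then forces $G$ to be the single edge $K_2$. Excluding this case, $e_u$ and $e_v$ are two \emph{distinct} edges lying on a common rainbow path, so the defining property of a rainbow path (no two edges sharing a colour) immediately yields that $e_u$ and $e_v$ receive different colours in any rainbow colouring. For the ``in particular'' clause I would note that the $p$ pendant edges are pairwise distinct (again, two of them coincide only in the $K_2$ case) and then apply the pairwise colour-distinctness just established to every pair of pendant vertices. This shows the $p$ pendant edges must carry $p$ pairwise different colours, so any rainbow colouring uses at least $p$ of them, giving $rc(G) \geq p$.

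I do not expect any genuine difficulty: the argument is purely local, relying only on the fact that a degree-one vertex pins down the first edge of every path through it, together with the definition of a rainbow path. The only point demanding care is the boundary case $G = K_2$, where the two incident edges are literally the same edge and the bound $rc(G) \geq p$ fails; the statement is honest precisely because one silently assumes $G$ has at least three vertices (equivalently, that distinct pendant vertices have distinct incident edges), so flagging and handling this case is the one subtlety I would be careful to record.
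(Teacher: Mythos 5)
Your proof is correct, and it is exactly the standard argument the paper has in mind: the authors explicitly omit the proof of this observation as ``elementary and well known,'' and the only natural route is the one you take --- a pendant vertex forces every path leaving it to use its unique incident edge, so any rainbow $u$--$v$ path contains both pendant edges, which therefore receive distinct colours. Your handling of the degenerate case $G = K_2$ (where the two pendant edges coincide and the bound $rc(G) \geq p$ literally fails) is a legitimate refinement that the paper's statement silently glosses over; it is harmless in the paper's applications, since every graph to which Observation~\ref{obsPendant} is applied there has more than two vertices.
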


\section{Split Graphs: Hardness and Approximation Algorithm}

We first show that determining the rainbow connection number of a split graph is NP-hard, by demonstrating a reduction to it from the $3$-colouring problem on $3$-uniform hypergraphs.

\begin{definition}
A {\em hypergraph} $H$ is a tuple $(V, E)$, where $V$ is a finite set and $E \subseteq 2^V$. Elements of $V$ and $E$ are called vertices and (hyper-)edges respectively. The hypergraph $H$ is called {\em $r$-uniform} if $|e| = r$ for every $e \in E$. An $r$-uniform hypergraph is called {\em complete} if $E = \{e  \subset V : |e| = r \}$.
\end{definition}

\begin{definition}
Given a hypergraph $H(V, E)$ and a colouring $C_H : V \into \N$, an edge is called {\em $k$-coloured} if the edge contains vertices of $k$ different colours. An edge is called {\em monochromatic} if it is $1$-coloured. The colouring $C_H$ is called {\em proper} if no edge in $E$ is monochromatic under $C_H$. The minimum number of colours required to properly colour $H$ is called its {\em chromatic number} and is denoted by $\chi(H)$.
\end{definition}

We need a $3$-uniform hypergraph of chromatic number $3$ to avoid the occurrence of a border case in the reduction. The following observation gives us one.

\begin{observation} \label{obsK53}
Let $K_5^3$ be the complete $3$-uniform hypergraph on $5$ vertices. Then $\chi(K_5^3) = 3$.
\end{observation}
\begin{proof}
Assign colours $0, 0, 1, 1, 2$ to the $5$ vertices of $K_5^3$. This is a proper colouring of $K_5^3$ since every edge contains $3$ vertices and hence cannot be monochromatic. On the other hand, in any colouring of $K_5^3$ using fewer than $3$ colours, some three vertices have to share the same colour and hence the edge of $K_5^3$ constituted of those $3$ vertices will be monochromatic. Hence $\chi(K_5^3) = 3$.
\end{proof}

It follows from  Theorem $1.1$ in \cite{khot2002hardness} that it is NP-hard to decide whether an $n$-vertex $3$-uniform hypergraph can be properly coloured using $3$ colours. A reduction from this problem to a problem of computing the rainbow connection number of a split graph is illustrated in the proofs of Theorem \ref{thmHyperGraphToSplit} and Theorem \ref{thmHyperGraphToChordal}.

\begin{theorem}
\label{thmHyperGraphToSplit}
The first problem below {\rm (P1)} is polynomial-time reducible to the second {\rm (P2)}.
\begin{enumerate}[{\rm P1.}]
\item Given a $3$-uniform hypergraph $H'$, decide whether $\chi(H') \leq 3$.
\item Given a split graph $G$, decide whether $rc(G) \leq 3$.
\end{enumerate}
\end{theorem}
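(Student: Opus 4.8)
The plan is to exhibit a polynomial-time transformation $H' \mapsto G$ producing a split graph $G$ with the property that $rc(G) \le 3$ if and only if $\chi(H') \le 3$. A preliminary move is to replace $H'$ by the disjoint union $H := H' \sqcup K_5^3$; since $\chi(H) = \max\{\chi(H'), \chi(K_5^3)\} = \max\{\chi(H'), 3\}$ by Observation \ref{obsK53}, we have $\chi(H) \le 3 \iff \chi(H') \le 3$, while now $\chi(H) \ge 3$ always holds. The purpose of this padding is to keep the image out of the regime where the constructed graph might be rainbow-colourable with one or two colours; this decouples the correspondence from the $1$- and $2$-colour cases and lets the reduction hinge solely on the gap between $3$ and $4$ colours, which is the ``border case'' alluded to before Observation \ref{obsK53}.

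Next I would build the split graph. I would take the clique $K$ to consist of one vertex $a_v$ for each vertex $v$ of $H$ together with a bounded number of auxiliary ``hub'' vertices, and take the independent set to contain one vertex $b_e$ for each hyperedge $e$ of $H$, making $b_e$ adjacent to exactly the clique vertices $\{a_v : v \in e\}$ of its three members. Because $K$ is complete and contains every $a_v$, the graph $G$ is connected even though $H$ is a disjoint union. A small, constant-size cluster of pendant vertices attached at a hub anchors the palette: by Observation \ref{obsPendant} their incident edges must receive pairwise distinct colours, pinning down ``colour $1$, colour $2$, colour $3$'' concretely. The colour assigned to a designated edge incident to each $a_v$ (for instance the edge from $a_v$ to a hub) is then read as the hypergraph colour $C_H(v) \in \{1,2,3\}$ of $v$, so that every rainbow $3$-colouring of $G$ induces a $3$-colouring of $H$, and conversely a proposed $3$-colouring of $H$ prescribes these designated edge colours.

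With the gadget fixed, I would verify the two implications. For completeness, starting from a proper $3$-colouring of $H$ I would colour the designated edges by $C_H$ and then assign colours to the remaining (free) edges so that every pair of vertices is joined by a rainbow path; here I would use that in such a split graph every pair of vertices is at distance at most $3$ (so only paths of length at most $3$ are in play, consistently with Observation \ref{obsDiameter}), and that for a non-monochromatic hyperedge $e$ the three vertices of $e$ carry at least two distinct colours, which is exactly what leaves enough slack to route $b_e$ to the anchored pendants and to the other hyperedge-vertices. For soundness I would argue the contrapositive: if some hyperedge $e$ were monochromatic under the induced $C_H$, then the local structure around $b_e$ together with the anchored palette would leave some specific pair of vertices with no rainbow path, forcing $rc(G) \ge 4$.

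The hard part will be engineering the gadget so that monochromaticity of a hyperedge coincides \emph{exactly} with the failure of some required rainbow path. The danger is that the edges incident to a hyperedge-vertex $b_e$ are ``free'' and could silently absorb the constraint, so that even a monochromatic $e$ stays repairable; the design must therefore over-constrain the relevant paths (for example by forcing them through two designated hub edges, whose colours are the vertex colours $C_H$) so that a rainbow route becomes available precisely when two of $C_H(x), C_H(y), C_H(z)$ differ. Pinning the palette with only a constant number of pendants (Observation \ref{obsPendant} permits at most three) and keeping all distances at most $3$ are the structural levers that make this calibration possible. Once the gadget is correct, checking that $G$ is genuinely a split graph and that the whole construction runs in polynomial time is routine.
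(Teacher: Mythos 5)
Your plan follows the paper's reduction essentially step for step: pad $H'$ with $K_5^3$, build a split graph whose clique is $V_H$ plus a hub, whose independent set is $E_H$ plus pendant vertices anchoring the palette, and read the hypergraph colour of $v$ off a designated hub edge. But what you have written is a plan, not a proof, and the part you explicitly defer --- ``engineering the gadget so that monochromaticity of a hyperedge coincides exactly with the failure of some required rainbow path'' --- is exactly where the content of Theorem~\ref{thmHyperGraphToSplit} lies. Even the soundness half, which is the easier one, is not nailed down: with a single hub $b$ and $C_H(v) := C_G(\{b,v\})$, the argument is that any rainbow path in a $3$-coloured graph has length at most $3$, so a rainbow path from a pendant $a_i$ to a hyperedge vertex $e$ must have the form $a_i$--$b$--$v$--$e$ with $v \in e$; if $e$ is monochromatic in colour $j$, then the pendant whose edge is coloured $j$ (one of the three must be, by Observation~\ref{obsPendant}) has both of its first two path edges coloured $j$, so that specific pair has no rainbow path. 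You gesture at this mechanism but never identify the stranded pair or invoke the length-at-most-$3$ fact inside the argument.

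The genuine gap is the completeness direction. Asserting that you ``would assign colours to the remaining (free) edges so that every pair of vertices is joined by a rainbow path,'' with non-monochromaticity providing ``enough slack,'' is precisely the claim requiring proof, and it is not routine. The paper must (i) colour the incidence edges $\{v,e\}$ by a case split on whether $e$ is $2$- or $3$-coloured under $C_H$, so as to guarantee that for every hyperedge $e$ and every colour $i$ there is a $2$-length rainbow path from $b$ to $e$ avoiding colour $i$ (this is what routes each pendant $a_i$ to every $e$); (ii) colour the clique edges non-trivially (colour $i$ inside colour class $V_i$, colour $2$ between $V_0$ and $V_1 \cup V_2$, colour $0$ between $V_1$ and $V_2$) to handle the pairs $a_i$ versus $v \in V_i$, which your designated hub edges alone cannot serve; and (iii) verify rainbow paths between every pair of hyperedge vertices $e, e'$, which needs a further case analysis (both $3$-coloured, mixed, both $2$-coloured with sub-cases), including the degenerate situations where $e$ and $e'$ share vertices. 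None of this appears in your proposal, so you have not established that $\chi(H) = 3$ implies $rc(G) \leq 3$; a clumsy choice of the free edge colours really can strand a pair $e, e'$, which is the very danger you name without resolving. The skeleton is the paper's, but the proof is missing its body.
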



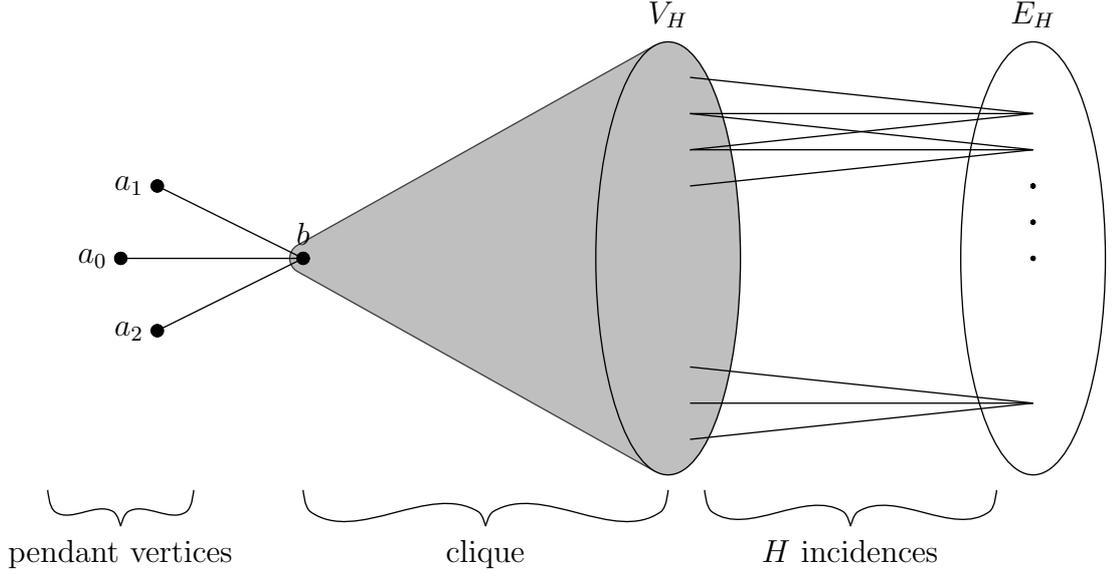
\begin{figure}[t]
\begin{center}
\psset{xunit=0.06\textwidth}
\psset{yunit=0.06\textwidth}
\begin{pspicture}(0,-4)(20,4)
	\psset{labelsep=5pt,linewidth=0.5pt}
	
	\pspolygon[linecolor=darkgray, linearc=0.2, fillstyle=solid, fillcolor=lightgray](4.6,0)(10.1,3.1)(10.1,-3.1)
	
	\psdots[dotstyle=o, dotsize=5pt,fillstyle=solid, fillcolor=black]
		(5,0)(3,-1)(2.5,0)(3,1)
	\uput[u](5, 0){$b$}
	\uput[l](3,-1){$a_2$}
	\uput[l](2.5, 0){$a_0$}
	\uput[l](3, 1){$a_1$}
	
	\psline{-}(5,0)(3, 1)
	\psline{-}(5,0)(2.5, 0)
	\psline{-}(5,0)(3, -1)
	
	\psellipse[fillstyle=solid, fillcolor=lightgray](10,0)(1,3)
	\uput[u](10, 3){$V_H$}
	
	\psellipse(15,0)(1,3)
	\uput[u](15, 3){$E_H$}

	\psline{-}(15,2)(10.3,2.5)
	\psline{-}(15,2)(10.3,2)
	\psline{-}(15,2)(10.3,1.5)

	\psline{-}(15,1.5)(10.3,2)
	\psline{-}(15,1.5)(10.3,1.5)
	\psline{-}(15,1.5)(10.3,1)

	\psdots[dotstyle=o, dotsize=2pt,fillstyle=solid, fillcolor=black]
		(15,1)(15,0.5)(15,0)
	
	\psline{-}(15,-2)(10.3,-2.5)
	\psline{-}(15,-2)(10.3,-2)
	\psline{-}(15,-2)(10.3,-1.5)

	\pscurve(1.5,-3.2)(1.6, -3.5)(2.4,-3.5)(2.5,-3.7)
	\pscurve(2.5,-3.7)(2.6,-3.5)(3.4,-3.5)(3.5,-3.2)	
	\uput[d](2.5, -3.7){pendant vertices}
	\pscurve(5,-3.2)(5.1, -3.5)(7.4,-3.5)(7.5,-3.7)
	\pscurve(7.5,-3.7)(7.6,-3.5)(9.9,-3.5)(10,-3.2)	
	\uput[d](7.5, -3.7){clique}
	\pscurve(10.5,-3.2)(10.6, -3.5)(12.4,-3.5)(12.5,-3.7)
	\pscurve(12.5,-3.7)(12.6,-3.5)(14.4,-3.5)(14.5,-3.2)	
	\uput[d](12.5, -3.7){$H$ incidences}

\end{pspicture}
\end{center}
\caption{Split graph $G$ constructed from a 3-uniform hypergraph $H$. Note that $V_H \cup \{b\}$ is a clique and $E_H \cup \{a_0, a_1, a_2 \}$ is an independent set in $G$.}
\label{figSplitGraph}
\end{figure}

\begin{proof}

Let $H$ be the disjoint union of $H'$ and a complete $3$-uniform hypergraph on $5$ vertices ($K_5^3$). This ensures that $\chi(H) \geq 3$ (Observation \ref{obsK53}) and that  $\chi(H) = 3$ iff $\chi(H') \leq 3$. Let $V_H$ and $E_H$ be the vertex set and edge set, respectively, of $H$. We construct a graph $G(V_G, E_G)$ from $H(V_H, E_H)$ as follows (See Figure \ref{figSplitGraph}).
\begin{eqnarray}
V_G &=& V_H \cup E_H \cup \{a_0, a_1, a_2, b \} \\
E_G &=& \{\{v, e\} : v \in V_H, e \in E_H, v \in e \textnormal{ in } H \} \nonumber \\
	& & \cup \, \{ \{v, v' \} : v, v' \in V_H, v \neq v' \}  \nonumber \\
	& & \cup \, \{ \{b, v \} : v \in V_H\} \nonumber \\
	& & \cup \, \{ \{a_i, b\} : i = 0, 1, 2\} 
\end{eqnarray}  
The graph $G$ thus constructed is a split graph with $V_H \cup \{b\}$ being a clique and its complement with respect to $V_G$, which is $E_H \cup \{a_0, a_1, a_2 \}$, being an independent set. It is clear that $G$ can be constructed from $H'$ in polynomial-time. We complete the proof by showing that $\chi(H) = 3$ iff $rc(G) = 3$.

Firstly, we show that if $rc(G) = 3$, then $\chi(H) = 3$. Since $\chi(H) \geq 3$, it suffices to show that $H$ can be properly $3$-coloured. Let $C_G : E_G \into \Z_3$ be a rainbow colouring of $G$. Define a colouring $C_H : V_H \into \Z_3$ by $C_H(v) = C_G(\{b, v\})$ for each $v \in V_H$. We claim that $C_H$ is a proper colouring of $H$. For the sake of contradiction, suppose that one of the hyper-edges $e_H$ of $H$ is monochromatic  under $C_H$, i.e, all the vertices in $e_H$ get the same colour $j$ for some $j \in \Z_3$. This happens only when $C_G(\{b, v\}) = j, \, \forall v \in e_H$. Hence all the paths of length two from $b$ to $e_H$ in $G$ will use the colour $j$. Since $\{a_0, a_1, a_2\}$ are pendant vertices, the edges from $\{a_0, a_1, a_2 \}$ to $b$ all have distinct colours in any rainbow colouring of $G$ (Observation \ref{obsPendant}). Hence one of them, say $\{a_i, b \}$, gets the colour $j$. Then it is easy to see that there is no rainbow path from $a_i$ to $e_H$ in $G$ under $C_G$ (Note that any rainbow path in a $3$-coloured graph has length at most $3$). This contradicts the fact that $C_G$ was a rainbow colouring of $G$. 

Next, we show that if $\chi(H) = 3$, then $rc(G) = 3$. Since $G$ has $3$ pendant vertices, $rc(G) \geq 3$ (Observation \ref{obsPendant}). So it suffices to show that $G$ can be rainbow coloured using $3$ colours. Let $C_H: V_H \into \Z_3$ be a proper colouring of $H$. Let $V_i = \{v \in V_H : C_H(v) = i \}$, $i \in \Z_3$, be the colour classes. Note that none of the colour classes is empty as $\chi(H) = 3$. We define a colouring $C_G : E_G \into \Z_3$ as follows (See Figure \ref{figSplitGraphColouring}). $C_G(\{b,v\}) = C_H(v)$ for each $v \in V_H$.  Consider a hyper-edge $e_H = \{v_0, v_1, v_2\}$ of $H$. If $e_H$ is $3$-coloured in $C_H$ then $C_G(\{v_i, e_H \}) = C_H(v_i) + 1$ (Note that the colours are from $\Z_3$ and hence the addition is modulo $3$). If $e_H$ is $2$-coloured in $H$, then without loss of generality, let $C_H(v_0) = C_H(v_1) = i$ and $C_H(v_2) = j$, $j \neq i$. Set $C_G(\{v_0, e_H \}) = i+1$, $C_G(\{v_1, e_H \}) = i+2$, and $C_G(\{v_2, e_H \}) \in \Z_3 \setminus \{i,j\}$. This ensures that for every hyper-edge $e \in E_H$, for each colour $i \in \Z_3$, there exists a $2$-length rainbow path $P_{e, i}$ from $b$ to $e$ such that colour $i$ does not appear in path $P_{e, i}$. The remaining edges of $G$ are coloured as follows.
\begin{eqnarray}
C_G(\{a_i, b\}) &=& i \quad \forall i \in \Z_3 \nonumber \\
C_G(\{v, v'\}) &=& i \quad \forall v, v' \in V_i, \, v \neq v', \, \forall i \in \Z_3  \nonumber \\
C_G(\{v, v'\}) &=& 2 \quad \forall v \in V_0 , v' \in V_1 \cup V_2  \nonumber \\
C_G(\{v, v'\}) &=& 0 \quad \forall v \in V_1 , v' \in V_2.
\end{eqnarray}

We show that $C_G$ is a rainbow colouring of $G$ by demonstrating a rainbow path between every pair of non adjacent vertices in $G$. First we demonstrate the paths from $\{a_0, a_1, a_2, b\}$ to all their non-adjacent vertices. (The numbers above an edge indicate the colour assigned to the edge under $C_G$.)
\begin{eqnarray}
\path{a_i}{a_j}, i \neq j
	&:& a_i \edge{i} b \edge{j} a_j \nonumber \\
\path{a_i}{v_j \in V_j}, i \neq j
	&:& a_i \edge{i} b \edge{j} v_j \nonumber \\
\path{a_i}{v_i \in V_i}
	&:& a_0 \edge{0} b \edge{1} V_1 \edge{2} v_0 \nonumber \\
	& & a_1 \edge{1} b \edge{0} V_0 \edge{2} v_1 \nonumber \\
	& & a_2 \edge{2} b \edge{1} V_1 \edge{0} v_2 \nonumber \\
\path{a_i}{e \in E_H}
	&:& a_i \edge{i} b \edge{P_{e,i}} e \nonumber \\
\path{b}{e \in E_H}
	&:& b \edge{P_{e,0}} e 
\end{eqnarray}

The rainbow path between any vertex $v \in V_H$ and a non-adjacent vertex $e \in E_H$ is given by $v \edge{i} b \edge{P_{e,i}} e$, if $v \in V_i$.
It remains to demonstrate a rainbow path between any two vertices $e = \{v_0, v_1, v_2\}, e' = \{v_0', v_1', v_2'\} \in E_H$. By $C_H(e)$ we denote the $3$-tuple $(C_H(v_0), C_H(v_1), C_H(v_2))$. If $e$ is $3$-coloured, we relabel $\{v_0, v_1, v_2\}$ so that $C_H(e) = (0, 1,2)$ and hence $C_G(\{v_i, e\}) = i+1$. If $e$ is $2$-coloured, we relabel $\{v_0, v_1, v_2\}$ so that $C_H(e) = (i,i,j), j \neq i$ and such that $C_G(\{v_0, e\}) = i+1$ and $C_G(\{v_1, e\}) = i + 2$. We do the same for $e'$ too. Edges $e$ and $e'$ may share some vertices, in which case the same vertex will get different labels when considered under $e$ and $e'$. We consider the following cases separately: (i) both $e$ and $e'$ are $3$-coloured, (ii) $e$ is $3$-coloured and $e'$ is 2-coloured and (iii) both $e$ and $e'$ are $2$-coloured. The last case is further split into $4$ sub-cases.  

\begin{eqnarray}
C_H(e) = C_H(e') = (0, 1, 2) &
			: e \edge{1} v_0 \edge{2} v_2' \edge{0} e' & 
			\textnormal{(Case i)} 
			\nonumber \\
C_H(e) = (0, 1, 2) , C_H(e') = (i, i, j), i \neq j &
			: e \edge{i+1} v_i \edge{i} v_1' \edge{i+2} e' &
			\textnormal{(Case ii)} 
			\nonumber \\
C_H(e) = (i, i, j) , C_H(e') = (i, i, k) &
			: e \edge{i+1} v_0 \edge{i} v_1' \edge{i+2} e' &
			\textnormal{(Case iii)} 
			\nonumber \\
C_H(e) = (0, 0, j) , C_H(e') = (1, 1, k) &
			: e \edge{1} v_0 \edge{2} v_1' \edge{0} e' &
			\nonumber \\
C_H(e) = (1, 1, j) , C_H(e') = (2, 2, k) &
			: e \edge{2} v_0 \edge{0} v_1' \edge{1} e' &
			\nonumber \\
C_H(e) = (2, 2, j) , C_H(e') = (0, 0, k) &
			: e \edge{0} v_0 \edge{2} v_0' \edge{1} e' &
\end{eqnarray}

It is possible that $v_i$ may coincide with $v_1'$ in Case (ii), and $v_0$ may coincide with $v_1'$ in the first sub-case of Case (iii). In both those situations, we still get a $2$-length rainbow path between the end points without using the middle edge indicated above. We have exhausted all the cases and hence $C_G$ is a rainbow colouring of $G$.
\end{proof}


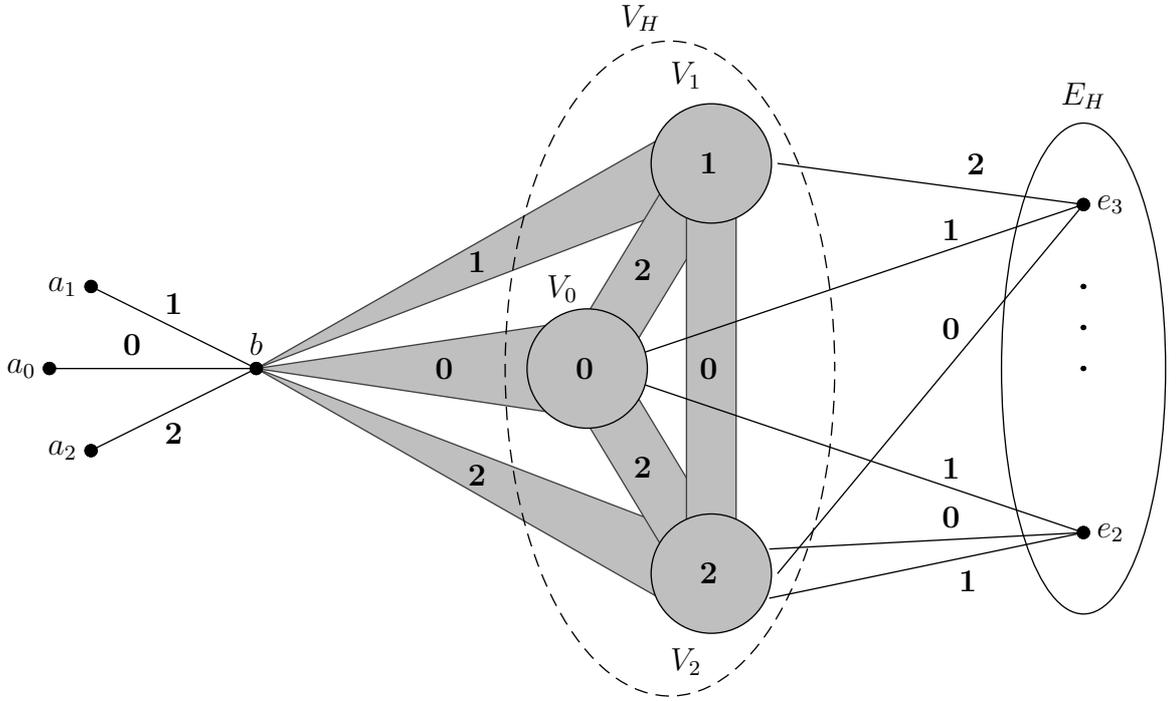
\begin{figure}[t]
\begin{center}
\psset{xunit=0.068\textwidth}
\psset{yunit=0.068\textwidth}
\begin{pspicture}(2,-3.3)(18,3.3)
	\psset{labelsep=5pt,linewidth=0.5pt}

	\pspolygon[linecolor=darkgray, fillstyle=solid, fillcolor=lightgray](5,0)(9,-0.6)(9,+0.6)
	\uput[0](7,0){\textbf 0}
	\pspolygon[linecolor=darkgray, fillstyle=solid, fillcolor=lightgray](5,0)(10.4,+3.1)(10.7,+2.2)
	\uput[0](7.4,+1.3){\textbf 1}
	\pspolygon[linecolor=darkgray, fillstyle=solid, fillcolor=lightgray](5,0)(10.4,-3.1)(10.7,-2.2)
	\uput[0](7.4,-1.3){\textbf 2}
	
	\pspolygon[linecolor=darkgray, fillstyle=solid, fillcolor=lightgray](8.6,0)(9.4,0)(10.9,+2.5)(10.1,+2.5)
	\uput[0](9.4,+1.2){\textbf 2}
	\pspolygon[linecolor=darkgray, fillstyle=solid, fillcolor=lightgray](8.6,0)(9.4,0)(10.9,-2.5)(10.1,-2.5)
	\uput[0](9.4,-1.2){\textbf 2}
	\pspolygon[linecolor=darkgray, fillstyle=solid, fillcolor=lightgray](10.8,-2.5)(10.2,-2.5)(10.2,+2.5)(10.8,+2.5)
	\uput[0](10.2,0){\textbf 0}

	\psdots[dotstyle=o, dotsize=5pt,fillstyle=solid, fillcolor=black]
		(5,0)(3,-1)(2.5,0)(3,1)
	\uput[u](5, 0){$b$}

	\uput[l](3, 1){$a_1$}
	\psline{-}(5,0)(3, 1)
	\uput[u](4,+0.5){\textbf 1}
	
	\uput[l](2.5, 0){$a_0$}
	\psline{-}(5,0)(2.5, 0)
	\uput[u](3.5,0){\textbf 0}
	
	\uput[l](3,-1){$a_2$}
	\psline{-}(5,0)(3, -1)
	\uput[d](4,-0.5){\textbf 2}

	\pscircle[fillstyle=solid, fillcolor=lightgray](9,0){0.8}
	\uput[ul](9, 0.7){$V_0$}
	\uput[r](8.7, 0){\textbf 0}
	\pscircle[fillstyle=solid, fillcolor=lightgray](10.5,+2.5){0.8}
	\uput[ul](10.5, +3.3){$V_1$}
	\uput[r](10.2, +2.5){\textbf 1}
	\pscircle[fillstyle=solid, fillcolor=lightgray](10.5,-2.5){0.8}
	\uput[dl](10.5, -3.3){$V_2$}
	\uput[r](10.2, -2.5){\textbf 2}
	
	\psellipse[linestyle=dashed](10,0)(2,4)
	\uput[ul](10, 4){$V_H$}
	
	\psellipse(15,0)(1,3)
	\uput[u](15, 3){$E_H$}

	\psdots[dotstyle=o, dotsize=5pt,fillstyle=solid, fillcolor=black](15,+2)
	\uput[r](15, 2){$e_3$}
	
	\psline{-}(15,2)(11.3,+2.5)
	\uput[u](13.7,2.2){\textbf 2}
	\psline{-}(15,2)(9.7,0.2)
	\uput[u](13.4,1.4){\textbf 1}
	\psline{-}(15,2)(11.3,-2.5)
	\uput[u](13.4,0.2){\textbf 0}
	
	\psdots[dotstyle=o, dotsize=2pt,fillstyle=solid, fillcolor=black]
		(15,1)(15,0.5)(15,0)
	
	\psdots[dotstyle=o, dotsize=5pt,fillstyle=solid, fillcolor=black](15,-2)
	\uput[r](15, -2){$e_2$}
	
	\psline{-}(15,-2)(9.7,-0.2)
	\uput[u](13.4,-1.5){\textbf 1}
	\psline{-}(15,-2)(11.2,-2.2)
	\uput[u](13.4,-2.1){\textbf 0}
	\psline{-}(15,-2)(11.2,-2.8)
	\uput[d](13.6,-2.3){\textbf 1}
	
\end{pspicture}
\end{center}
\caption{Rainbow colouring of split graph $G$ based on the $3$-colouring of hypergraph $H$. In the figure, $e_3$ is a sample $3$-coloured edge and $e_2$ is a sample $2$-coloured edge.}
\label{figSplitGraphColouring}
\end{figure}

Since Problem P1 is known to be NP-hard, so is Problem P2. Further, it is easy to see that the problem P2 is in NP. Hence the following corollary.

\begin{corollary}
\label{corSplitHardness}
Deciding whether $rc(G) \leq 3$ remains NP-complete even when $G$ is restricted to be in the class of split graphs.
\end{corollary}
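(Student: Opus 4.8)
The plan is to combine the reduction already established in Theorem \ref{thmHyperGraphToSplit} with a membership argument, since NP-completeness requires both NP-hardness and containment in NP.

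For the hardness direction, I would invoke Theorem \ref{thmHyperGraphToSplit} directly. It exhibits a polynomial-time reduction from P1 (deciding $\chi(H') \leq 3$ for a $3$-uniform hypergraph $H'$) to P2 (deciding $rc(G) \leq 3$ for a split graph $G$), and it is already recorded there that the constructed graph $G$ is genuinely split, with $V_H \cup \{b\}$ a clique and $E_H \cup \{a_0, a_1, a_2\}$ an independent set. Since it was noted (via Theorem $1.1$ of \cite{khot2002hardness}) that P1 is NP-hard, the reduction transfers this hardness to P2. Hence deciding whether $rc(G) \leq 3$ is NP-hard even on split graphs, and this half of the corollary needs no new argument.

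For membership in NP, I would propose the natural certificate: an edge colouring $C : E(G) \into \{1,2,3\}$. The verifier must confirm that $C$ is a rainbow colouring, that is, that every pair of vertices is joined by a rainbow path. The key enabling fact, already used inside the proof of Theorem \ref{thmHyperGraphToSplit}, is that any rainbow path in a graph coloured with at most $3$ colours has length at most $3$. Consequently, for each of the $O(n^2)$ vertex pairs $(u,v)$ it suffices to search only for rainbow paths of length $1$, $2$, or $3$: a length-$1$ check is adjacency; a length-$2$ check looks for a common neighbour $w$ with $C(\{u,w\}) \neq C(\{w,v\})$; a length-$3$ check enumerates intermediate pairs $(w,x)$ and verifies that the three edges exist and are distinctly coloured. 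Each such test runs in polynomial time, so the whole verification is polynomial and the certificate is of polynomial size, placing P2 in NP. Combining the two ingredients yields the corollary.

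I do not expect a genuine obstacle here, since the substantive work is the reduction of Theorem \ref{thmHyperGraphToSplit}. The only point requiring any care is the bounded-length-path observation that keeps the NP verification efficient: without it, confirming that a given colouring is rainbow could a priori demand searching over exponentially many paths, and it is precisely the restriction to at most $3$ colours that caps path length and makes the check polynomial.
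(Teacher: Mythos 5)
Your proposal is correct and follows essentially the same route as the paper: NP-hardness via the reduction of Theorem \ref{thmHyperGraphToSplit}, plus membership in NP, which the paper dismisses as ``easy to see.'' Your explicit verification argument---that a $3$-coloured graph admits only rainbow paths of length at most $3$, so checking all pairs takes polynomial time---is exactly the detail the paper leaves implicit, and it is the right one.
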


The reduction used in the proof of Theorem \ref{thmHyperGraphToSplit} can be extended to show that for every $k \geq 3$, it is NP-complete to decide whether a chordal graph can be rainbow coloured using $k$ colours.

\begin{theorem}
\label{thmHyperGraphToChordal}
For any integer $k \geq 3$, the first problem below {\rm (P1)} is polynomial-time reducible to the second {\rm (P2)}.
\begin{enumerate}[{\rm P1.}]
\item Given a $3$-uniform hypergraph $H'$, decide whether $\chi(H') \leq 3$.
\item Given a chordal graph $G$, decide whether $rc(G) \leq k$.
\end{enumerate}
In particular, for every integer $k \geq 3$, the problem of deciding whether $rc(G) \leq k$ remains NP-complete even when $G$ is restricted to be in the class of chordal graphs.
\end{theorem}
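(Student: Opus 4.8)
The plan is to reuse the split-graph gadget of Theorem \ref{thmHyperGraphToSplit} and \emph{stretch} it so that its diameter becomes exactly $k$. As before, set $H = H' \sqcup K_5^3$, so that $\chi(H) = 3 \iff \chi(H') \le 3$ (Observation \ref{obsK53}). Build the graph of Figure \ref{figSplitGraph}, but instead of joining the three pendant vertices $a_0, a_1, a_2$ directly to the apex $b$, hang them at the far end of a path: introduce vertices $b = b_{k-3}, b_{k-4}, \ldots, b_1, b_0$ forming an induced path, keep $b_{k-3}$ adjacent to the whole clique $V_H$, and attach $a_0, a_1, a_2$ as pendants to $b_0$. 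For $k = 3$ this is exactly the split graph of Theorem \ref{thmHyperGraphToSplit}. Since the appended path-plus-pendants is a tree hanging off the clique $V_H \cup \{b_{k-3}\}$, the graph $G_k$ is still chordal and is constructible from $H'$ in polynomial time. (Note that split graphs have diameter at most $3$, so once $k \ge 4$ one is forced to leave the split class, which is why the weaker chordal requirement is natural here.) A direct distance computation gives $d(a_i, e) = 1 + (k-3) + 2 = k$ for every $e \in E_H$, and one checks this is the largest pairwise distance, so $diam(G_k) = k$ and hence $rc(G_k) \ge k$ by Observation \ref{obsDiameter}. I would then prove $rc(G_k) = k \iff \chi(H) = 3$, which together with the lower bound is equivalent to $rc(G_k) \le k \iff \chi(H') \le 3$.

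For the forward direction ($\chi(H) = 3 \Rightarrow rc(G_k) \le k$) I would lift the colouring of Theorem \ref{thmHyperGraphToSplit}. Colour the $k-3$ path edges with fresh colours $3, 4, \ldots, k-1$, set $\{a_i, b_0\} = i$, and colour each apex edge $\{b_{k-3}, v\}$ and each edge inside $V_H \cup E_H$ exactly as $C_G$ does there, using only $\{0,1,2\}$. Every original rainbow path that began $a_i \to b \to \cdots$ now becomes $a_i \to b_0 \to b_1 \to \cdots \to b_{k-3} \to \cdots$; inserting the path edges only adds the fresh colours $\{3, \ldots, k-1\}$, which cannot clash with $\{0,1,2\}$, so rainbowness is preserved. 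Pairs involving the new path vertices are handled by first travelling along the (rainbow) path and then entering the already rainbow-connected gadget.

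The crux is the backward direction ($rc(G_k) \le k \Rightarrow \chi(H) = 3$). Fix a rainbow colouring $C$ using at most $k$ colours; since $rc(G_k) \ge k$ it uses exactly $k$. For each pendant $a_i$ and each $e \in E_H$ there is a rainbow $a_i$--$e$ path, and because a rainbow path in a $k$-coloured graph has length at most $k = d(a_i,e)$, it must be a shortest path; this is forced to be $a_i - b_0 - b_1 - \cdots - b_{k-3} - v - e$ with $v \in e$, using all $k$ colours once. The $k-3$ path edges are common to all such paths, so they carry $k-3$ distinct colours forming a set $S$, and the three non-path edges of each forced path take the three complementary colours $\bar S$, so $|\bar S| = 3$. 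The pendant edges $\{a_i, b_0\}$ are pairwise distinctly coloured (Observation \ref{obsPendant}) and each avoids $S$, so they realise all of $\bar S$; relabel colours so that $\bar S = \{0,1,2\}$ and $C(\{a_i,b_0\}) = i$. Then the forced path for each pair $(a_i, e)$ makes some apex edge $\{b_{k-3}, v\}$, $v \in e$, take a colour in $\{0,1,2\} \setminus \{i\}$. Defining $C_H(v)$ to be the colour of $\{b_{k-3},v\}$ when that colour lies in $\{0,1,2\}$ (and arbitrary otherwise) and running this over $i = 0,1,2$ forces at least two distinct colours of $\{0,1,2\}$ onto the vertices of every hyperedge, so no edge is monochromatic and $C_H$ properly $3$-colours $H$.

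The main obstacle I anticipate is exactly this extraction: unlike the $k=3$ case, apex edges may legitimately use the \emph{path} colours $S$, so one cannot simply read a $3$-colouring off the apex edges. The argument above sidesteps this by invoking only the three forced paths (one per pendant colour) to guarantee that two genuinely distinct colours of $\bar S$ survive on each hyperedge, which already rules out monochromatic edges. With the reduction established, NP-completeness of P2 follows: P1 is NP-hard by Theorem $1.1$ of \cite{khot2002hardness} with Observation \ref{obsK53}, the map $H' \mapsto G_k$ is polynomial-time and yields a chordal graph, and P2 lies in NP since a rainbow colouring is a polynomial-size, polynomially checkable certificate.
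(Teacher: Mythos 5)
Your proposal is correct and follows essentially the same route as the paper: the same graph $G_k$ (path $b_0,\ldots,b_{k-3}$ with three pendants, apex $b_{k-3}$ joined to the clique $V_H$), the same forward direction via $k-3$ fresh colours on the path edges, and the same crux in the backward direction, namely that any rainbow $a_i$\--$e$ path in a $k$-colouring is forced to be the shortest path $a_i \- b_0 \- \cdots \- b_{k-3} \- v \- e$ using all $k$ colours. The only cosmetic difference is in extracting the $3$-colouring: the paper caps apex-edge colours via $C_H(v)=\min\{C_G(\{b_{k-3},v\}),2\}$ and derives a contradiction from a monochromatic hyperedge, while you argue directly that every hyperedge receives two distinct colours of $\bar S=\{0,1,2\}$; the two bookkeeping schemes are equivalent.
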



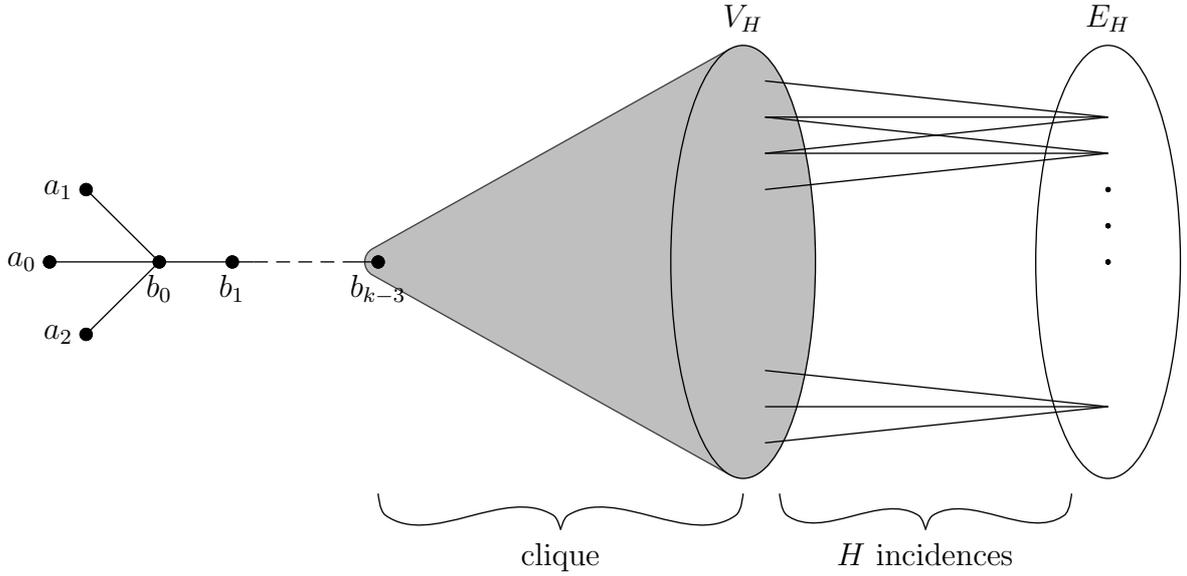
\begin{figure}[t]
\begin{center}
\psset{xunit=0.06\textwidth}
\psset{yunit=0.06\textwidth}
\begin{pspicture}(0,-4)(20,3)
	\psset{labelsep=5pt,linewidth=0.5pt}
	
	\pspolygon[linecolor=darkgray, linearc=0.2, fillstyle=solid, fillcolor=lightgray](4.6,0)(10.1,3.1)(10.1,-3.1)
	
	\psdots[dotstyle=o, dotsize=5pt,fillstyle=solid, fillcolor=black]
		(1,-1)(0.5,0)(1,1)(2,0)(3,0)(5,0)
	\uput[d](2, 0){$b_0$}
	\uput[d](3, 0){$b_1$}
	\uput[d](5, 0){$b_{k-3}$}
	\uput[l](1,-1){$a_2$}
	\uput[l](0.5, 0){$a_0$}
	\uput[l](1, 1){$a_1$}
	
	\psline{-}(2,0)(1, 1)
	\psline{-}(3.3,0)(0.5, 0)
	\psline[linestyle=dashed]{-}(3.3,0)(4.7, 0)
	\psline{-}(4.7,0)(5, 0)
	\psline{-}(2,0)(1, -1)
	
	\psellipse[fillstyle=solid, fillcolor=lightgray](10,0)(1,3)
	\uput[u](10, 3){$V_H$}
	
	\psellipse(15,0)(1,3)
	\uput[u](15, 3){$E_H$}

	\psline{-}(15,2)(10.3,2.5)
	\psline{-}(15,2)(10.3,2)
	\psline{-}(15,2)(10.3,1.5)

	\psline{-}(15,1.5)(10.3,2)
	\psline{-}(15,1.5)(10.3,1.5)
	\psline{-}(15,1.5)(10.3,1)

	\psdots[dotstyle=o, dotsize=2pt,fillstyle=solid, fillcolor=black]
		(15,1)(15,0.5)(15,0)
	
	\psline{-}(15,-2)(10.3,-2.5)
	\psline{-}(15,-2)(10.3,-2)
	\psline{-}(15,-2)(10.3,-1.5)

	\pscurve(5,-3.2)(5.1, -3.5)(7.4,-3.5)(7.5,-3.7)
	\pscurve(7.5,-3.7)(7.6,-3.5)(9.9,-3.5)(10,-3.2)	
	\uput[d](7.5, -3.7){clique}
	\pscurve(10.5,-3.2)(10.6, -3.5)(12.4,-3.5)(12.5,-3.7)
	\pscurve(12.5,-3.7)(12.6,-3.5)(14.4,-3.5)(14.5,-3.2)	
	\uput[d](12.5, -3.7){$H$ incidences}

\end{pspicture}
\end{center}
\caption{Chordal graph $G_k$ of diameter $k$ constructed from a 3-uniform hypergraph $H$.}
\label{figChordalGraph}
\end{figure}

\begin{proof}
Let $H$ be the disjoint union of $H'$ and a complete $3$-uniform hypergraph on $5$ vertices ($K_5^3$). This ensures that $\chi(H) \geq 3$ (Observation \ref{obsK53}) and that  $\chi(H) = 3$ iff $\chi(H') \leq 3$. Let $V_H$ and $E_H$ be the vertex set and edge set, respectively, of $H$. Let $k \geq 3$ be fixed. We construct a graph $G_k(V_G, E_G)$ from $H$ as follows (See Figure \ref{figChordalGraph}).
\begin{eqnarray}
V_G &=& V_H \cup E_H \cup \{a_0, a_1, a_2, b_0, \ldots, b_{k-3} \} \\
E_G &=& \{\{v, e\} : v \in V_H, e \in E_H, v \in e \textnormal{ in } H \} \nonumber \\
	& & \cup \, \{ \{v, v' \} : v, v' \in V_H, v \neq v' \}  \nonumber \\
	& & \cup \, \{ \{b_{k-3}, v \} : v \in V_H\} \nonumber \\
	& & \cup \, \{ \{b_{i-1}, b_i \} : i = 1, \ldots, k-3\} \nonumber \\
	& & \cup \, \{ \{a_i, b_0\} : i = 0, 1, 2\} 
\end{eqnarray}  
The graph $G$ thus constructed is easily seen to be a chordal graph with diameter $k$. It is clear that $G$ can be constructed from $H'$ in polynomial-time. We complete the proof by showing that $\chi(H) = 3$ iff $rc(G) = k$. 

It is easy to see that when $k = 3$, the graph $G_3$ constructed as above is the same as the split graph constructed in the proof of Theorem \ref{thmHyperGraphToSplit}. In that proof we showed a rainbow colouring of $G_3$ using $3$ colours in the case when $\chi(H) = 3$. The same colouring can be extended to $G_k$ by giving $k-3$ new colours exclusively to the edges $\{b_{i-1}, b_i\},\, i = 1, \ldots, k-3$. Since the original $3$-colouring made $G_3$ rainbow connected, it is easy to see that this colouring makes $G_k$ rainbow connected. Hence it is enough to show that if $rc(G_k) = k$, then $\chi(H) = 3$.

Since $\chi(H) \geq 3$, it suffices to show that $H$ can be properly $3$-coloured. Let $C_G : E_G \into \{0, \ldots, k-1\}$ be a rainbow colouring of $G_k$. Since the subgraph $T_k$ of $G_k$ induced on $\{a_0, a_1, a_2, b_0, \ldots, b_{k-3}\}$ is a tree with $k$ edges, it is easy to see that in any rainbow colouring of $G_k$ the edges of $T_k$ get $k$ distinct colours. Without loss of generality we rename the colours so that $C_G(\{a_i, b_0\}) = i, \, i \in \{0, 1, 2\}$. Hence the edges in the path from $b_0$ to $b_{k-3}$ get colours from $\{3, \ldots, k-1\}$. Define a colouring $C_H : V_H \into \{0, 1, 2\}$ by $C_H(v) = \min\{C_G(\{b_{k-3}, v\}), 2\}$ for each $v \in V_H$. We claim that $C_H$ is a proper colouring of $H$. For the sake of contradiction, suppose that one of the hyper-edges $e_H$ of $H$ is monochromatic  under $C_H$, i.e, all the vertices in $e_H$ get the same colour $j$ for some $j \in \{0, 1, 2\}$. This happens only when $\min\{C_G(\{b_{k-3}, v\}), 2\} = j, \, \forall v \in e_H$. If $j$ is $0$ or $1$, all the paths of length two from $b_{k-3}$ to $e_H$ in $G$ will use the colour $j$ and hence there is no ($k$-length) rainbow path from $a_j$ to $e_H$. If $j$ is $2$, then too, all the paths of length two from $b_{k-3}$ to $e_H$ in $G$ will use one of the colours already used in the unique path from $a_2$ to $b_{k-3}$ and hence there is no ($k$-length) rainbow path from $a_2$ to $e_H$.

Since Problem P1 is known to be NP-hard, so is Problem P2. Further, it is easy to see that the problem P2 is in NP. Hence the result.
\end{proof}

In the wake of Corollary \ref{corSplitHardness}, it is unlikely that there exists a polynomial-time algorithm to optimally rainbow colour split graphs in general. In Section \ref{secThreshold}, we show that the problem is efficiently solvable when restricted to threshold graphs, which are a subclass of split graphs. Before that, we describe a linear-time (approximation) algorithm which rainbow colours any split graph using at most one colour more than the optimum (Theorem \ref{thmSplitAlgo}). First we note that it is easy to find a maximum clique in a split graph, as follows.

The vertices of a graph can be sorted according to their degrees in $O(n)$ time using a counting sort \cite{seward1954sort}. If $G([n], E)$ is a split graph with the vertices labelled so that $d_1 \geq \cdots \geq d_n$, where $d_i$ is degree of vertex $i$, then $\{i \in V(G) : d_i \geq i-1\}$ is a maximum clique in $G$ and $\{i \in V(G) : d_i \leq i-1\}$ is a maximum independent set in $G$ \cite{hammer1981splittance}. Hence we can assume, if needed, that a maximum clique  or a maximum independent set or an ordering of the vertices according to their degrees is given as input to our algorithms.


\begin{algorithm}
\caption{\sc ColourSplitGraph}
\label{algColourSplitGraph}
\begin{algorithmic}[1]
\REQUIRE 
	$G([n],E)$, a connected split graph  with a maximum clique $C$. 
\ENSURE
	A rainbow colouring $C_G : E(G) \into \{0, \ldots, \max\{p, 2 \} \}$, where $p$ is the number of pendant vertices in $V(G) \setminus C$.

\STATE $I \is V(G) \setminus C$
	\COMMENT{$I$ is an independent set in $G$}
\STATE $P \is \{i \in I : d_i = 1\}$, $p \is |P|$
	\COMMENT{$P$ is the set of pendant vertices in $I$}

\STATE $C_G(e) \is 0$, for all edges $e$ with both end points in $C$.

\STATE $C_G(e_i) \is i$ for each pendant edge $e_1, \ldots, e_p$ 

\FOR{$i \in I \setminus P$}
	\STATE Let $\{e_1, \ldots, e_{d_i} \}$ be the edges incident on $i$
	\STATE $C_G(e_1) \is 1$
	\STATE $C_G(e) \is 2$ for every other edge $e$ incident on $i$
\ENDFOR
		\COMMENT{Now every vertex in $I \setminus P$ has a $1$-coloured and a $2$-coloured edge to $C$}

\RETURN $C_G$

\end{algorithmic}
\end{algorithm}

\begin{theorem} \label{thmSplitAlgo}
For every connected split graph $G$, Algorithm \ref{algColourSplitGraph} {\sc(ColourSplitGraph)} rainbow colours $G$ using at most $rc(G) + 1$ colours. Further, the time-complexity of Algorithm \ref{algColourSplitGraph} is $O(m)$.
\end{theorem}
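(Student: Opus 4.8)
The plan is to verify three claims about Algorithm~\ref{algColourSplitGraph} in turn: that the colouring $C_G$ it returns is a genuine rainbow colouring, that it uses at most $rc(G)+1$ colours, and that it runs in $O(m)$ time. Write $C$ for the supplied maximum clique and $I=V(G)\setminus C$ for the complementary independent set, let $P\subseteq I$ be the set of pendant vertices and $p=|P|$. Two structural facts will be used repeatedly: every $v\in I$ has $N(v)\subseteq C$ (as $I$ is independent), and every non-pendant $v\in I\setminus P$ has $d_v\ge 2$, so the loop on lines~5--9 gives it exactly one incident edge of colour $1$, whose clique endpoint I call $\alpha(v)$, and a nonempty set $B(v)$ of incident edges of colour $2$. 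Recall also that all edges inside $C$ receive colour $0$ and the $p$ pendant edges receive the distinct colours $1,\dots,p$.

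For the colour count I would split on $p$. If $p\ge 2$, every colour used lies in $\{0,1,\dots,p\}$, so at most $p+1$ colours appear, and Observation~\ref{obsPendant} gives $rc(G)\ge p$, yielding the bound. If $p\le 1$, every colour used lies in $\{0,1,2\}$; since colour $2$ can only be introduced by the loop on lines~5--9, the appearance of a third colour forces $I\setminus P\neq\emptyset$, hence $I\neq\emptyset$, hence $C\neq V(G)$ and $G$ is not complete, so $rc(G)\ge 2$ and again $3\le rc(G)+1$. When at most two colours appear the bound is trivial.

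The substantive part is producing a rainbow path for every non-adjacent pair; since pairs inside $C$ are adjacent, only $C$--$I$ and $I$--$I$ pairs remain, and I expect the $I$--$I$ analysis to be the main obstacle. A non-adjacent pair $c\in C$, $v\in I$ is handled uniformly: any $c_v\in N(v)$ satisfies $c_v\neq c$, and $c\to c_v\to v$ has colours $0$ and a colour $\ge 1$, hence is rainbow. For two pendants $u,v$ with neighbours $c_u,c_v$ and distinct colours $i_u\neq i_v$, the path $u\to c_u\to v$ (if $c_u=c_v$) or $u\to c_u\to c_v\to v$ (otherwise) is rainbow, its colours being $\{i_u,i_v\}$ or $\{i_u,0,i_v\}$.

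For a pendant $u$ (neighbour $c_u$, colour $i_u$) and a non-pendant $v$, I would seek $c_v\in N(v)\setminus\{c_u\}$ whose edge to $v$ avoids colour $i_u$, so that $u\to c_u\to c_v\to v$ has the three distinct colours $i_u$, $0$, and a colour in $\{1,2\}$; the only cases where no such $c_v$ exists are $i_u=1$ with $B(v)=\{c_u\}$ and $i_u=2$ with $\alpha(v)=c_u$, and in each the two-edge path $u\to c_u\to v$ already uses colours $\{1,2\}$ and is rainbow. For two non-pendants $u,v$ I would take the path $u\to\alpha(u)\to c_v\to v$ with colour pattern $1,0,2$, choosing $c_v\in B(v)\setminus\{\alpha(u)\}$; if that set is empty I exchange the roles of $u$ and $v$ (pattern $2,0,1$), and if both attempts fail then $B(v)=\{\alpha(u)\}$ and $B(u)=\{\alpha(v)\}$, forcing $N(u)=N(v)=\{\alpha(u),\alpha(v)\}$, whereupon $u\to\alpha(u)\to v$ carries colours $1,2$ and is rainbow. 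This exhausts the non-adjacent pairs. Finally, the running time is immediate once $C$ is given: line~3 and the loop on lines~5--9 together inspect each edge a constant number of times, so the procedure is $O(m)$.
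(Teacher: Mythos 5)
Your proof is correct and takes essentially the same approach as the paper: the paper's proof likewise verifies rainbow-connectivity by checking all pairs of non-adjacent vertices (a step it declares ``easy to check,'' which you carry out in full), and it obtains the colour bound from Observations \ref{obsDiameter} and \ref{obsPendant} exactly as you do. The only difference is one of detail --- you make explicit the degenerate configurations (e.g.\ two non-pendant vertices of $I$ whose neighbourhoods coincide, where the generic three-edge path collapses to a two-edge one) that the paper's sketch glosses over.
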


\begin{proof}
If $G$ is a clique, then $C = V(G)$ and Algorithm \ref{algColourSplitGraph} colours every edge of $G$ with colour $0$. This is an optimal rainbow colouring for $G$. Hence we can assume that $G$ is not a clique in the following discussions. So $d := diam(G) \geq 2$. It is easy to check, by considering all pairs of non-adjacent vertices, that Algorithm \ref{algColourSplitGraph} indeed produces a rainbow colouring of $G$. For example, between two vertices $v, v' \in I \setminus P$, we get a rainbow path $v \edge{1} C \edge{0} C \edge{2} v'$. It is also evident that the algorithm uses at most $k := \max \{p+1, 3\}$ colours. By Observation \ref{obsDiameter} and Observation \ref{obsPendant}, $rc(G) \geq \max\{p, d\} \geq \max\{p, 2\} = k - 1$. Hence the rainbow colouring produced by Algorithm \ref{algColourSplitGraph} uses at most $rc(G) + 1$ colours. 

Further, the algorithm visits each edge exactly once and hence the time-complexity is $O(m)$.
\end{proof}

The following bounds follow directly from Observation \ref{obsDiameter}, Observation \ref{obsPendant}, and Theorem \ref{thmSplitAlgo}.

\begin{corollary} \label{corSplitBound}
For every connected split graph $G$ with $p$ pendant vertices and diameter $d$, 
$$ \max\{p, d\} \leq rc(G) \leq \max\{p + 1, 3\}.$$
\end{corollary}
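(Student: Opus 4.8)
The plan is to read off both inequalities directly from results already established in the excerpt; no genuinely new argument is required, and I would flag this up front.

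For the lower bound, I would invoke Observation \ref{obsDiameter}, which gives $rc(G) \geq diam(G) = d$ for every connected graph, together with Observation \ref{obsPendant}, which gives $rc(G) \geq p$ whenever $G$ has $p$ pendant vertices. Since both inequalities hold simultaneously, taking their maximum yields $rc(G) \geq \max\{p, d\}$, exactly the left-hand bound of the corollary.

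For the upper bound, I would appeal to Theorem \ref{thmSplitAlgo}: Algorithm \ref{algColourSplitGraph} is guaranteed to output a genuine rainbow colouring of any connected split graph $G$, and by its specification (the \textbf{Output} clause) this colouring assigns colours from $\{0, \ldots, \max\{p,2\}\}$, hence uses at most $\max\{p,2\}+1$ colours. Since the number of colours used by \emph{any} rainbow colouring of $G$ is an upper bound on $rc(G)$, I conclude $rc(G) \leq \max\{p,2\}+1$. The only remaining point is the purely arithmetic identity $\max\{p,2\}+1 = \max\{p+1, 3\}$ (if $p \geq 2$ both sides equal $p+1$, and if $p \leq 1$ both equal $3$), which converts this into the stated right-hand bound. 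Combining the two displayed inequalities gives $\max\{p,d\} \leq rc(G) \leq \max\{p+1,3\}$.

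I do not expect any real obstacle here: the substantive work — verifying that the algorithm actually produces a rainbow colouring and bounding its colour count — was already carried out in the proof of Theorem \ref{thmSplitAlgo}, and the two lower-bound observations are elementary. The corollary is essentially a bookkeeping step that packages those three results into a single double inequality, so the ``hard part'' is nothing more than the trivial maximum identity noted above.
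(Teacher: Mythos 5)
Your proposal is correct and matches the paper's own (one-line) proof, which derives the lower bound from Observations \ref{obsDiameter} and \ref{obsPendant} and the upper bound from the colour count of Algorithm \ref{algColourSplitGraph} established in Theorem \ref{thmSplitAlgo}; you simply spell out the bookkeeping, including the identity $\max\{p,2\}+1 = \max\{p+1,3\}$, which the paper leaves implicit. Nothing is missing.
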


\section{Threshold Graphs: Characterisation and Exact Algorithm}
\label{secThreshold}

Threshold graphs form a subclass of split graphs (Observation \ref{obsThreshold}b). The neighbourhoods of vertices in a maximum independent set of a threshold graph form a linear order under set inclusion (Observation \ref{obsThreshold}c). We exploit this structure to give a full characterisation of rainbow connection number of threshold graphs based on degree sequences (Corollary \ref{corThresholdChar}). We use this characterisation to design a linear-time algorithm to optimally rainbow colour any threshold graph (Algorithm \ref{algColourThresholdGraph}). 

The following observations are easy to make from the definition of a threshold graph (Definition \ref{defClasses}).

\begin{observation}
\label{obsThreshold}
Let $G([n], E)$ be a threshold graph with a weight function $w : V(G) \into \R$. Let the vertices be labelled so that $w(1) \geq \cdots \geq w(n)$. Then
\begin{enumerate}[{\rm (a)}]
\item $d_1 \geq \cdots \geq d_n$, where $d_i$ is the degree of vertex $i$.
\item $I = \{i \in V(G) : d_i \leq i-1\}$ is a maximum independent set $G$ and $V(G) \setminus I$ is a clique in $G$. In particular, every threshold graph is a split graph. 
\item $N(i) = \{1, \ldots, d_i \}$, for every $i \in I$. Thus the neighbourhoods of vertices in $I$ form a linear order under set inclusion. Further, if $G$ is connected, then every vertex in $G$ is adjacent to $1$. 
\end{enumerate}
\end{observation}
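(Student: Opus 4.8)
The plan is to isolate a single structural fact and derive all three parts from it. Because the vertices are indexed so that $w(1) \geq \cdots \geq w(n)$, for each vertex $i$ the set $\{k : w(i)+w(k) \geq t\}$ is an initial segment $\{1,\ldots,s_i\}$ of $[n]$: if it contains $k$ and $k' < k$, then $w(k') \geq w(k)$ forces $w(i)+w(k') \geq t$. Discarding the self-pair gives the \emph{prefix description} $N(i) = \{1,\ldots,s_i\}\setminus\{i\}$, which remains valid even when weights tie. Part (a) is then immediate from nested neighbourhoods: for $i<j$ we have $w(i) \geq w(j)$, so any neighbour $k \neq i$ of $j$ is also a neighbour of $i$, i.e.\ $N(j)\setminus\{i\} \subseteq N(i)\setminus\{j\}$; since $i \in N(j) \Leftrightarrow j \in N(i)$, the two removed elements cancel and $d_j \leq d_i$.

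For part (b) I would first establish the clique/independent split directly from the prefix description, reading $d_i = s_i - 1$ when $i \leq s_i$ and $d_i = s_i$ otherwise. If $d_j \geq j$ then the second case is impossible (it would give $d_j = s_j < j$), so $d_j = s_j - 1 \geq j$, whence $s_j \geq j+1$ and $j$ is adjacent to every vertex $i<j$; thus $K := \{i : d_i \geq i\}$ is a clique. Conversely, if $i<j$ are adjacent then $j \in N(i)$ forces $j \leq s_i$, hence $i < s_i$, hence $d_i = s_i - 1 \geq j-1 \geq i$, so $i \in K$; therefore $I := V(G)\setminus K = \{i : d_i \leq i-1\}$ is independent. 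This already exhibits $G$ as a split graph. To upgrade independence to maximality I would invoke the cited degree characterisation of split graphs \cite{hammer1981splittance}: part (a) makes the weight order a non-increasing degree order, so $\{i : d_i \leq i-1\}$ is a maximum independent set and $\{i : d_i \geq i-1\} \supseteq K$ a maximum clique, giving the claim (and $K$, as a subset of a clique, is itself a clique).

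For part (c), fix $i \in I$, so $d_i \leq i-1$, and apply the prefix description. If $i \notin \{1,\ldots,s_i\}$ then $N(i) = \{1,\ldots,s_i\}$ and $d_i = s_i$; if $i \in \{1,\ldots,s_i\}$ then $d_i = s_i-1 \leq i-1$ forces $s_i = i$ and $N(i) = \{1,\ldots,i-1\}$. In both cases $N(i) = \{1,\ldots,d_i\}$, so the neighbourhoods of $I$-vertices are initial segments and hence form a chain under inclusion. Finally, if $G$ is connected then every vertex has degree at least $1$, so $s_j \geq 1$ and $1 \in N(j)$ for each $j \neq 1$; that is, vertex $1$ is universal.

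I expect the only delicate point to be the bookkeeping in part (c): the off-by-one governed by whether $i$ lies inside its own adjacency prefix $\{1,\ldots,s_i\}$ is exactly what the defining inequality $d_i \leq i-1$ of $I$ is designed to eliminate, and one must verify that both cases collapse to the same segment $\{1,\ldots,d_i\}$. The sole external ingredient is the maximality claim in (b), for which I rely on the already-cited splittance characterisation rather than reproving it from scratch.
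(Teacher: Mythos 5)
Your proof is correct. Note first that the paper offers no argument at all for this observation---it is prefaced only by the remark that it is ``easy to make from the definition''---so there is no proof of record to compare against; yours is a legitimate filling-in of the omitted verification, and the route you take (everything flows from the fact that each vertex's weight-admissible set is an initial segment $\{1,\ldots,s_i\}$ of the weight ordering, intersected away from the vertex itself) is surely the argument the authors had in mind. The bookkeeping in (b) and (c) over whether $i$ lies inside its own prefix is handled correctly, as is the observation that ties in the weights are harmless. The one step you outsource---maximality of $I$ in part (b), via the splittance characterisation of \cite{hammer1981splittance}---is consistent with the paper's own practice, since the paper quotes exactly that statement for split graphs in Section 2. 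If you prefer the observation to be self-contained, maximality also follows in two lines from what you have already established: any independent set meets the clique $K = V(G)\setminus I = \{1,\ldots,k-1\}$ in at most one vertex, so it can exceed $|I|$ only if some $v \in K$ has no neighbour in $I$; but (when $K \neq \emptyset$) vertex $k-1$ has degree at least $k-1 > |K\setminus\{k-1\}|$ and hence has some neighbour $u \in I$, and then every $v \in K$ satisfies $w(v)+w(u) \geq w(k-1)+w(u) \geq t$, so every vertex of $K$ is adjacent to $u \in I$ and no such $v$ exists.
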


\begin{definition}
A {\em binary codeword} is a finite string over the alphabet $\{0, 1\}$ (bits). The {\em length} of a codeword $b$, denoted by $length(b)$, is the number of bits in the string $b$. We denote the $i$-th bit of $b$ by $b(i)$.  A codeword $b_1$ is said to be a {\em prefix} of a codeword $b_2$ if $length(b_1) \leq length(b_2)$ and $b_1(i) = b_2(i)$ for all $i \in \{1, \ldots, length(b_1)\}$. A {\em binary code} is a set of binary codewords. A binary code $B$ is called {\em prefix-free} if no codeword in $B$ is a prefix of another codeword in $B$.
\end{definition}

The Kraft's Inequality \cite{kraft1949device} gives a necessary and sufficient condition for the existence of a prefix-free code for a given set of codeword lengths.

\begin{theorem}[Kraft 1949 \cite{kraft1949device}]
\label{thmKraft}
For every prefix-free binary code $B = \{b_1, \ldots, b_n\}$, 
$$ \sum_{i = 1}^{n}{2^{-l_i}} \leq 1$$
where $l_i = length(b_i)$, and conversely, for any sequence of lengths $l_1, \ldots, l_n$ satisfying the above inequality, there exists a prefix-free binary code $B = \{b_1, \ldots, b_n\}$, with $length(b_i) = l_i,\, i = 1, \ldots, n$.
\end{theorem}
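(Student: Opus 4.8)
For necessity, the plan is to embed the code in a complete binary tree and count leaves. Let $L = \max_i l_i$ and consider the complete binary tree of depth $L$, whose $2^L$ leaves are in bijection with the binary strings of length $L$. Identify each codeword $b_i$ with the node reached by reading its bits from the root; the set of depth-$L$ leaves lying below this node is exactly the set of length-$L$ strings having $b_i$ as a prefix, and there are precisely $2^{L - l_i}$ of them. The prefix-free hypothesis says that no codeword is a prefix of another, which in tree terms means that no codeword's node is an ancestor of another's, so these leaf-sets are pairwise disjoint. Summing over $i$ gives $\sum_i 2^{L - l_i} \leq 2^L$, and dividing through by $2^L$ yields $\sum_i 2^{-l_i} \leq 1$. (Equivalently, one could phrase this probabilistically: feeding independent fair coin flips into the decoder, the events ``the stream begins with $b_i$'' are mutually exclusive and have probabilities $2^{-l_i}$, so they sum to at most $1$.)

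For sufficiency I would give an explicit greedy construction. Relabel the indices so that $l_1 \leq \cdots \leq l_n$ (harmless, since a code is a set) and set $w_i = \sum_{j=1}^{i-1} 2^{-l_j}$, so that $0 = w_1 \leq \cdots \leq w_n < 1$, the strict upper bound coming from the hypothesis $\sum_j 2^{-l_j} \leq 1$. Define $b_i$ to be the first $l_i$ bits of the binary expansion of $w_i$; that is, write $w_i = 0.b_i(1)b_i(2)\cdots$ in base two and truncate after $l_i$ bits. Then $length(b_i) = l_i$ as required, and it remains only to check that the resulting code is prefix-free.

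The verification of the prefix-free property is the step I expect to be the main obstacle, since it is where the arithmetic of the construction has to be matched against the combinatorial definition of ``prefix''. The plan is to argue by contradiction: suppose $b_i$ is a prefix of $b_j$ for some $i < j$ (so $l_i \leq l_j$). Interpreting the truncated string $b_i$ as the dyadic fraction $\sum_{t=1}^{l_i} b_i(t) 2^{-t}$, the truncation property says that $b_i \leq w_i < b_i + 2^{-l_i}$, and the assumption that $b_i$ is also a prefix of $b_j$ forces $b_i \leq w_j < b_i + 2^{-l_i}$ as well. Hence $|w_j - w_i| < 2^{-l_i}$. On the other hand, $w_j - w_i = \sum_{k=i}^{j-1} 2^{-l_k} \geq 2^{-l_i}$, since the single term $k = i$ already contributes $2^{-l_i}$. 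This contradiction shows no codeword is a prefix of another, completing the converse.

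The only delicacy I anticipate is bookkeeping around terminating versus non-terminating binary expansions of the dyadic numbers $w_i$, but the strict inequalities in the gap argument above are designed to absorb that ambiguity, so no separate case analysis should be needed.
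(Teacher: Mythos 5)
Your proof is correct, but the comparison here is unusual: the paper offers no proof of Theorem \ref{thmKraft} at all --- it is quoted from Kraft's 1949 paper --- and the only related argument the authors write out is the proof of Observation \ref{obsKraft}, a constructive strengthening of the converse direction that their Algorithm \ref{algColourThresholdGraph1} actually relies on. There, the code is built by growing a binary tree leaf by leaf at depths $l_1 \leq \cdots \leq l_n$, backtracking to the first unsaturated node after each leaf; termination is guaranteed because a binary tree all of whose internal nodes are saturated satisfies $\sum_{t \in L} 2^{-d_t} = 1$ over its leaves $L$, which would contradict $\sum_{j=1}^{n} 2^{-l_j} \leq 1$ if the process got stuck early. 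Your route differs on both sides: for necessity (which the paper uses in Theorem \ref{thmThreshod2} but never proves) you give the standard disjoint-leaf-counting argument in the complete depth-$L$ tree, which is fine; for sufficiency you replace tree-growing by the arithmetic greedy rule taking $b_i$ to be the first $l_i$ bits of $w_i = \sum_{j<i} 2^{-l_j}$, verified by playing $w_j - w_i \geq 2^{-l_i}$ against $|w_j - w_i| < 2^{-l_i}$. The two constructions yield essentially the same code, and each formulation buys something: the paper's tree version makes the $O\big(\sum_i l_i\big)$ running time and the property that every bit of $b_1$ is $0$ immediate (both are needed in Observation \ref{obsKraft}), while your dyadic version reduces prefix-freeness to a two-line arithmetic contradiction; note that your construction also enjoys both extra properties, since $w_1 = 0$ makes $b_1$ all zeros and the codewords are just the $l_i$-bit representations of the integers $2^{l_i} w_i$. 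One point to tighten: the expansion ambiguity you flag at the end does not merely get ``absorbed'' --- if you took the non-terminating expansion of a dyadic $w_i$, the inequality $w_i < b_i + 2^{-l_i}$ would genuinely fail --- but it vanishes entirely once the lengths are sorted, because every $l_j \leq l_i$ for $j < i$ makes $2^{l_i} w_i$ an integer, so the truncation is exact and $b_i$ represents $w_i$ itself; you should simply state that you use the terminating expansion. (Similarly, your reduction to the case $i<j$ deserves the one-line remark that, with sorted lengths, a prefix relation with $i>j$ forces $l_i = l_j$ and hence $b_i = b_j$, after which the roles of $i$ and $j$ can be swapped.)
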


\begin{observation} \label{obsKraft}
Given any sequence of lengths $l_1 \leq \cdots \leq l_n$ satisfying the Kraft Inequality, we can construct a prefix-free binary code $B = \{b_1, \ldots, b_n\}$, with $length(b_i) = l_i,\, i = 1, \ldots, n$ in time $O\big(\sum_{i=1}^n{l_i} \big)$. Further, we can ensure that every bit in $b_1$ is $0$.
\end{observation}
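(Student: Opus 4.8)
The plan is to realise each codeword as a truncation of a partial Kraft sum, and to generate these truncations incrementally so that the total work is controlled. Concretely, I would set $w_i = \sum_{k=1}^{i-1} 2^{-l_k}$ (so $w_1 = 0$) and let $b_i$ be the value $w_i$ written out as exactly $l_i$ bits after a binary point, i.e.\ as a string with $0.b_i = w_i$. Because the lengths are non-decreasing, every $l_k$ with $k < i$ satisfies $l_k \leq l_{i-1} \leq l_i$, so $w_i$ is an integer multiple of $2^{-l_{i-1}}$ lying in $[0,1)$ and is therefore represented exactly by the $l_i$-bit string $b_i$. Since $w_1 = 0$, the codeword $b_1$ is the all-zero string of length $l_1$, which already delivers the final sentence of the statement.

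For prefix-freeness I would associate to $b_i$ the dyadic interval $J_i = [\,w_i,\; w_i + 2^{-l_i})$ and observe that if $b_i$ were a prefix of $b_j$, then $w_j$, whose expansion begins with $b_i$, would lie in $J_i$. Now for $i < j$ one has $w_j - w_i = \sum_{k=i}^{j-1} 2^{-l_k} \geq 2^{-l_i}$, hence $w_j \geq w_i + 2^{-l_i}$ and so $w_j \notin J_i$. This rules out $b_i$ being a prefix of $b_j$; and since $l_i \leq l_j$, the only way $b_j$ could be a prefix of $b_i$ is $b_i = b_j$, which the same gap estimate forbids. Thus no codeword is a prefix of another.

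The delicate part, and the main obstacle, is the $O\big(\sum_i l_i\big)$ time bound, since a priori a single binary increment can trigger an arbitrarily long carry chain. I would build the codewords incrementally: starting from $b_1 = 0^{l_1}$, I obtain $b_{i+1}$ from $b_i$ by (i) adding $2^{-l_i}$ to the $l_i$-bit string $b_i$ (a single increment at the least significant position with leftward carry propagation, which never overflows the leading bit because $w_{i+1} < 1$ for $i+1 \leq n$), and (ii) appending $l_{i+1}-l_i$ trailing zeros. Step (ii) costs $\sum_i (l_{i+1}-l_i) = l_n - l_1$ in total, and merely emitting the $n$ codewords costs $\sum_i l_i$; both are within budget. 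To control the carries in step (i) I would argue by amortisation, taking the potential to be the number of stored $1$-bits: an increment that flips $t$ trailing ones to zero and one zero to one has carry-chain length $t+1$ and changes the potential by $1-t$, while appending zeros leaves the potential unchanged. Summing the per-step identity (carry cost) $= 2 - (\textnormal{change in potential})$ telescopes to $\sum_i (\textnormal{carry cost}) \leq 2(n-1)$, because the potential starts at $0$ (as $b_1$ is all zeros) and remains non-negative. Hence the carries contribute only $O(n)$, and the entire construction runs in $O\big(\sum_i l_i\big)$ time.
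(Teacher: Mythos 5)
Your proposal is correct, and in fact it constructs the very same code as the paper does, but by a genuinely different argument. The paper works with the binary-tree representation of a prefix-free code: create the root, descend via first children to depth $l_1$, and after each leaf backtrack to the first unsaturated node, create its second child, and descend again. There, prefix-freeness is automatic (codewords are leaves of a tree), the Kraft inequality is invoked only to prove \emph{termination} --- if the process stalled, all internal nodes would be saturated, forcing $\sum_{t \in L} 2^{-d_t} = 1$ and contradicting the inequality --- and the time bound is immediate because the tree has at most $\sum_{i=1}^n l_i$ edges, each traversed at most twice. You instead build the canonical code arithmetically, $b_i$ being the exact $l_i$-bit expansion of the partial sum $w_i = \sum_{k<i} 2^{-l_k}$: here the Kraft inequality is what guarantees representability and absence of overflow, prefix-freeness is \emph{not} automatic but follows cleanly from the dyadic-interval gap estimate $w_j \geq w_i + 2^{-l_i}$ for $i < j$, and the time bound requires the binary-counter amortization via the number-of-ones potential. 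The two are dual in where the work sits: your carry chains are precisely the paper's backtracking walks, so your potential argument replaces its edge-traversal count, and your interval argument replaces its tree-leaf argument. Your version dispenses with any tree data structure and makes prefix-freeness a one-line consequence of the partial sums, at the price of a (standard but necessary) amortized analysis that the paper's accounting gets essentially for free; all steps in your write-up --- representability of $w_i$ in $l_i$ bits, the gap estimate covering both prefix directions, no overflow since $w_{i+1} < 1$ for $i+1 \leq n$, and the telescoping bound $2(n-1)$ on total carry cost --- check out, so there is no gap.
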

\begin{proof}
A {\em binary tree} is a rooted tree in which every node has at most two child nodes. A node with only one child node is said to be {\em unsaturated}. The {\em level} of a node is its distance from the root. We assume that every edge from a parent to its first (second) child, if it exists, is labelled $0$ ($1$). We can represent a prefix-free binary code by a binary tree such that (i) every codeword $b_i$ corresponds to a leaf $t_i$ of the binary tree at level $length(b_i)$ and (ii) the labels on the unique path from the root to a leaf will be the codeword associated with that leaf \cite{cover2005elementsCh5}. We construct a prefix-free binary code with the given length sequence by constructing the corresponding binary tree as explained below.

Create the root, and for every new node created, create its first child till we hit a node $t_1$ at depth $l_1$ for the first time. Declare $t_1$ as a leaf. Once we have created a leaf $t_i$, $i < n$, we proceed to create the next leaf as follows. Backtrack from $t_i$ along the tree created so far towards the root till we hit the first unsaturated node. Create its second child. If the second child is at level $l_{i+1}$, then declare it as the leaf $t_{i+1}$. Else, recursively create first child till we create a node at level $l_{i+1}$ and declare it as leaf $t_{i+1}$. Terminate this process once we create the leaf $t_n$.

The process will continue till we create all the $n$ leaves. Otherwise, it has to be the case that every internal node in the tree got saturated by the time we created some leaf $t_i$, $i < n$. If we have a binary tree $T$ with every internal node saturated, it is easy to see by an inductive argument that $\sum_{t \in L}2^{-d_t} = 1$, where $L$ is the set of leaves of $T$ and $d_t$ denotes the level of leaf $t$. Hence $\sum_{j=1}^{n}2^{-l_j} > \sum_{j=1}^{i}2^{-l_j} = 1$, contradicting the hypothesis that the lengths $l_1, \ldots, l_n$ satisfy the Kraft Inequality.


It follows from the construction that every bit of $b_1$ is $0$. Since every edge in the tree constructed corresponds to a bit in at least one of the codewords returned, the total number of edges in the tree constructed is at most $\sum_{i=1}^n{l_i}$. Since each edge of the tree is traversed at most twice, the construction will be completed in time $O\big(\sum_{i=1}^n{l_i} \big)$.
\end{proof}

Now we give a necessary and sufficient condition for $2$-rainbow-colourability of a threshold graph.


\begin{algorithm}[t]
\caption{\sc ColourThresholdGraph-Case1}
\label{algColourThresholdGraph1}
\begin{algorithmic}[1]
\REQUIRE 
	$G([n],E)$, a connected threshold graph, with $d_1 \geq \cdots \geq d_n$ and $\sum_{i=k}^{n}{2^{-d_i}} \leq 1$, where $d_i$ is the degree of vertex $i$ and $k = \min\{i : 1 \leq  i \leq n, \, d_i \leq i-1 \}$.
\ENSURE
	A rainbow colouring $C_G : E(G) \into \{0, 1\}$  of $G$.

\STATE $I = \{k, \ldots, n\}$
	\COMMENT{$I$ is a maximal independent set in $G$}

\STATE Let $\mathcal{B} = \{b_k, \ldots, b_n \}$ be a prefix-free code with $length(b_i) = d_i$ (constructed as mentioned in Observation \ref{obsKraft})

\FOR{$i \in I$}
	\STATE $C_G(\{i,j\}) = b_i(j), \, \forall j \in \{1, \ldots, d_i \}$ 
\ENDFOR

\FOR[$i < k$]{$i \in V(G) \setminus I$}
	\STATE $C_G(\{i,j\}) = b_k(j), \, \forall j \in \{1, \ldots, i-1 \}$ 
			\COMMENT{Note that $length(b_k) = d_k = k-1$}
\ENDFOR

\RETURN $C_G$
\end{algorithmic}
\end{algorithm}

\begin{theorem} \label{thmThreshod2}
For every connected threshold graph $G([n], E)$ with $d_1 \geq \cdots \geq d_n$, $rc(G) \leq 2$ if and only if
\begin{equation} \label{eqnTrheshold}
\sum_{i = k}^{n}2^{-d_i} \leq 1,
\end{equation}
where $d_i$ is the degree of vertex $i$ and $k = \min\{i : 1 \leq  i \leq n, \, d_i \leq i-1 \}$. Further, if $G$ satisfies Inequality (\ref{eqnTrheshold}), then Algorithm \ref{algColourThresholdGraph1} {\sc (ColourThresholdGraph-Case1)} gives an optimal rainbow colouring of $G$ in $O(m)$ time.
\end{theorem}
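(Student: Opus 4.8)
The plan is to set up a correspondence between rainbow $2$-colourings of $G$ and prefix-free binary codes on the independent-set vertices, and then read off both directions of the characterisation, the algorithm's correctness, and optimality from Kraft's Inequality (Theorem \ref{thmKraft}) and its constructive version (Observation \ref{obsKraft}).

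First I would fix the structure supplied by Observation \ref{obsThreshold}: the maximum independent set is $I = \{k, \ldots, n\}$ and the clique is $\{1, \ldots, k-1\}$, with $N(i) = \{1, \ldots, d_i\}$ nested for $i \in I$ and $d_k \ge \cdots \ge d_n$. I would then record the fact $d_k = k-1$ that the algorithm relies on: vertex $k-1$ lies outside $I$, so $d_{k-1} \ge k-1$, which exceeds its $k-2$ clique-internal neighbours; hence $k-1$ has some neighbour $j \in I$, and nesting gives $k-1 \le d_j \le d_k$, so $d_k \ge k-1$, while $k \in I$ gives $d_k \le k-1$.

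The crux is the following dictionary. Given any colouring with colour set $\{0,1\}$, associate to each $i \in I$ the codeword $b_i$ of length $d_i$ defined by $b_i(j) = C_G(\{i,j\})$ for $j \le d_i$. For distinct $i, i' \in I$ with $d_i \le d_{i'}$, every $i$--$i'$ path has length $\ge 2$, and a rainbow path in a $2$-coloured graph has length $\le 2$, so a rainbow $i$--$i'$ path must be $i \edge{} c \edge{} i'$ through a common neighbour $c$; the common neighbours are exactly $\{1, \ldots, d_i\}$, and such a path is rainbow iff $b_i(c) \ne b_{i'}(c)$ for some $c \le d_i$, i.e. iff $b_i$ is not a prefix of $b_{i'}$. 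Thus a $2$-colouring rainbow-connects all pairs inside $I$ if and only if $\{b_i : i \in I\}$ is prefix-free. This immediately yields necessity: if $rc(G) \le 2$, a rainbow $2$-colouring produces a prefix-free code with lengths $d_k, \ldots, d_n$, so Inequality (\ref{eqnTrheshold}) follows from Theorem \ref{thmKraft}.

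For sufficiency and the algorithm, assuming (\ref{eqnTrheshold}) I would invoke Observation \ref{obsKraft} (applied to the reversed, non-decreasing length list) to build a prefix-free code $\{b_k, \ldots, b_n\}$ with $length(b_i) = d_i$, and verify that the colouring of Algorithm \ref{algColourThresholdGraph1} rainbow-connects the remaining non-adjacent pairs. Pairs inside $I$ are handled by the dictionary. For a non-adjacent clique--independent pair $i < k \le j$ (so $i > d_j$, forcing $j > k$), the common neighbours include $\{1, \ldots, d_j\}$; the edge $\{i,c\}$ is a clique edge coloured $b_k(c)$ while $\{j,c\}$ is coloured $b_j(c)$, so I need some $c \le d_j$ with $b_j(c) \ne b_k(c)$, which holds since $b_j$ (length $d_j \le d_k$) is not a prefix of $b_k$ by prefix-freeness, and here the equality $d_k = k-1$ guarantees $b_k$ has enough bits to colour every clique edge. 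Clique pairs are adjacent. The colouring uses at most $2$ colours; if $G$ is complete ($I$ a single vertex) the all-zero codeword from Observation \ref{obsKraft} makes every edge colour $0$, matching $rc(G) = 1$, and otherwise $diam(G) = 2$ forces $rc(G) = 2$ (Observation \ref{obsDiameter}), so the colouring is optimal. The running time is $O(m)$, since the code costs $O(\sum_{i \in I} d_i) = O(m)$ and the two colouring loops touch each of the $\le m$ edges once, the clique loop costing $O(k^2) = O(m)$ because the clique alone contributes $\binom{k-1}{2}$ edges. The main obstacle is getting the dictionary exactly right, including the clique--$I$ pairs, which is precisely where the equality $d_k = k-1$ and the ``not-a-prefix'' reformulation do the real work.
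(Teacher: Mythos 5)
Your proposal is correct and follows essentially the same route as the paper's own proof: codewords $b_i(j) = C_G(\{i,j\})$ on the nested neighbourhoods of $I$, Kraft's Inequality for necessity, Observation \ref{obsKraft} plus a case check on non-adjacent pairs for sufficiency, and the same optimality and $O(m)$ timing arguments. If anything, you are slightly more careful than the paper, which asserts $d_k = k-1$ only in an algorithm comment and leaves implicit that $j \neq k$ in the clique--independent case; your explicit derivations of both facts are welcome additions.
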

\begin{proof}

Note that $I := \{k, \ldots, n\}$ is a maximal independent set in $G$ (Observation \ref{obsThreshold}b) and that the summation on the left hand side of Inequality (\ref{eqnTrheshold}) is over all the vertices in $I$. Hence $C := \{1, \ldots, k-1 \}$ is a clique in $G$.

First we show that if $rc(G) \leq 2$, then the inequality is satisfied. Let $C_G : E(G) \into \{0,1\}$ be a rainbow colouring of $G$. We can associate a codeword with each vertex $i \in I$ by reading the colours assigned by $C_G$ to edges $\{i, c\}, c = 1, \ldots, d_i$. Since every pair $i, j \in I, d_i \leq d_j$ are non-adjacent, they need a $2$-length rainbow path between them through a common neighbour $c \in \{1, \ldots, d_i \}$ (Observation \ref{obsThreshold}c). This ensures that the codewords corresponding to $i$ and $j$ are complementary in at least one bit position. Hence the binary code formed by codewords corresponding to all the vertices in $I$ form a prefix-free code. Hence the inequality is satisfied (by Theorem \ref{thmKraft}). 


Conversely, if the inequality is satisfied, then Algorithm \ref{algColourThresholdGraph1} gives a colouring $C_G$ of $E(G)$ using at most $2$ colours. We show that $C_G$ is indeed a rainbow colouring of $G$. Consider any two non-adjacent vertices $i, j \in V(G), \, i < j$. Since they are non-adjacent, either both of them are in $I$ or otherwise $j$ is in $I$ and $i$ is from the clique $C$ such that $i > d_j$ (Since $N(j) = \{1, \ldots, d_j\}$). In the former case, $length(b_i) \geq length(b_j)$ and there exists a $v \in \{1, \ldots, d_j \leq d_i\}$ such that $b_j(v) \neq b_i(v)$ since $b_j$ is not a prefix of $b_i$ (They both belong to a prefix-free code $B$). Hence $i \mbox{--} v \mbox{--} j$ is a rainbow path. Similarly in the latter case, $length(b_k) \geq length(b_j)$ and there exists a $v \in \{1, \ldots, d_j < i\}$ such that $b_j(v) \neq b_k(v)$ since $b_j$ is not a prefix of $b_k$. Hence $C_G(\{v, j\}) \neq C_G(\{v, i\})$ and $i \mbox{--} v \mbox{--} j$ is a rainbow path. Hence $C_G$ is a rainbow colouring of $G$. 

If $G$ is not a clique, then  $rc(G) \geq 2$ (Observation \ref{obsDiameter}), and hence the above rainbow colouring is optimal. If $G$ is a clique then $k = n$ and $|B| = |I| = 1$. So the single codeword $b_n$ constructed as mentioned in Observation \ref{obsKraft} has all the bits $0$. So every edge of $G$ is coloured using the single colour $0$, which is optimal for $G$. 

Since $\sum_{i=1}^n{l_i} = \sum_{i=1}^n{d_i} = 2m$, the prefix-free code $B$ can be constructed in $O(m)$ time (Observation \ref{obsKraft}). Moreover, Algorithm \ref{algColourThresholdGraph1} visits each edge only once. Hence the total time complexity is $O(m)$.
\end{proof}

Now we consider the case of threshold graphs which violate Inequality (\ref{eqnTrheshold}).


\begin{algorithm}[t]
\caption{\sc ColourThresholdGraph-Case2}
\label{algColourThresholdGraph2}
\begin{algorithmic}[1]
\REQUIRE 
	$G([n],E)$, a connected threshold graph, with $d_1 \geq \cdots \geq d_n$, where $d_i$ is the degree of vertex $i$.
\ENSURE
	A rainbow colouring $C_G : E(G) \into \{0, \ldots, \max\{p, 3\}-1 \}$ of $G$, where $p$ is the number of pendant vertices in $G$. 

\STATE $P \is \{i \in V(G) : d_i = 1\}$, $p \is |P|$
	\COMMENT{$P$ is the set of pendant vertices in $G$}

\STATE $C_G(\{p_i, 1\}) \is i-1$ for each pendant vertex $p_1, \ldots, p_p$ 

\IF{$p = n - 1$}
	\RETURN $C_G$ \COMMENT{$G$ is a star}
\ENDIF

\STATE $C_G(\{1,2\}) = 0$

\FOR{$i = 3$ \TO $i = n - p$}
	\STATE $C_G(\{i, 1\}) = 1$ 
	\STATE $C_G(\{i, 2\}) = 2$
		\COMMENT{Every $v \in \{3, \ldots, n-p\}$ is adjacent to vertices $1$ and $2$.}
\ENDFOR

\STATE $C_G(e) = 0$ for each edge $e$ of $G$ not coloured so far.
\RETURN $C_G$
\end{algorithmic}
\end{algorithm}

\begin{theorem} \label{thmThreshod3p}
For every connected threshold graph $G$ which does not satisfy Inequality (\ref{eqnTrheshold}),
$$ rc(G) = \max\{p,3\},$$
where $p$ is the number of pendant vertices in $G$. 

Further, Algorithm \ref{algColourThresholdGraph2} {\sc (ColourThresholdGraph-Case2)} gives an optimal rainbow colouring of $G$ in $O(m)$ time
\end{theorem}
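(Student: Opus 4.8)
The plan is to prove matching lower and upper bounds on $rc(G)$. For the lower bound I would note that $rc(G) \geq p$ directly from Observation \ref{obsPendant}, and that $rc(G) \geq 3$ because $G$ violates Inequality (\ref{eqnTrheshold}) and hence, by Theorem \ref{thmThreshod2}, cannot be rainbow coloured with two colours; together these give $rc(G) \geq \max\{p,3\}$. The bulk of the work is the matching upper bound, which I would obtain by showing that the colouring $C_G$ produced by Algorithm \ref{algColourThresholdGraph2} is a valid rainbow colouring. Counting colours is immediate: the pendant edges use the distinct colours $\{0,\dots,p-1\}$ and the remaining edges use only $\{0,1,2\}$, so $C_G$ uses exactly $\max\{p,3\}$ colours, matching the lower bound and settling $rc(G) = \max\{p,3\}$ once rainbowness is verified.

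Before verifying the rainbow property I would record a few structural facts from Observation \ref{obsThreshold}. Since $G$ is connected, every pendant vertex is adjacent only to vertex $1$, so the $p$ pendant edges are distinct and receive distinct colours. When $G$ is not a star I would argue $d_2 \geq 2$ (if $d_2 = 1$ then $d_2 = \cdots = d_n = 1$, forcing $G$ to be a star), so vertex $2$ lies in the clique; consequently every $i \in \{3,\dots,n-p\}$, whether a clique vertex or an independent-set vertex with $d_i \geq 2$, is adjacent to both $1$ and $2$, which is exactly what the colouring of $\{i,1\}$ and $\{i,2\}$ relies on. I would also establish $n-p \geq 3$: vertices $1$ and $2$ are non-pendant, and if they were the only ones then vertex $2$, being adjacent only to $1$, would itself be pendant, a contradiction. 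This guarantees that an intermediate hub $w \in \{3,\dots,n-p\}$ is always available when needed.

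With these facts in hand I would verify rainbowness by a case analysis over the non-adjacent pairs, which (since the clique is complete) are exactly the pairs inside the independent set together with the clique-vertex/independent-vertex pairs $c,v$ with $c > d_v$. Using vertices $1$ and $2$ as hubs, the candidate routes are the two-edge paths through $1$ or through $2$ and the three-edge paths $u \edge{1} 1 \edge{0} 2 \edge{2} v$ and its mirror $u \edge{2} 2 \edge{0} 1 \edge{1} v$ (recall $\{1,2\}$ is coloured $0$). For two non-pendant independent vertices both two-edge paths clash ($1,1$ or $2,2$), but the three-edge path through both hubs is rainbow with colours $1,0,2$; for a non-pendant vertex $u$ and a pendant vertex $v$ one of the short paths works depending on the colour of $\{1,v\}$; and the clique-vertex case is handled symmetrically, using the hub $w$ when $c=2$ and the pendant colour clashes. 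The separate easy case is the star, where $G$ is a tree on $p=n-1$ edges so $rc(G)=n-1=p=\max\{p,3\}$ (a star violating the inequality has $n \geq 4$, hence $p \geq 3$).

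I expect the main obstacle to be the colour clashes between the structural colours $0,1,2$ and the pendant colours $\{0,\dots,p-1\}$: when a pendant edge $\{1,v\}$ happens to carry colour $0$ or $1$, the naive two-edge path collides and one must reroute through the other hub, for instance $u \edge{2} 2 \edge{0} 1 \edge{1} v$ giving the distinct colours $2,0,1$. Checking that for every such clash some short rainbow path survives, is simple, and uses only guaranteed edges — which is precisely where $n-p \geq 3$ and the hub adjacencies to $1$ and $2$ enter — is the delicate part of the argument. Finally, since each edge is assigned a colour a bounded number of times and the pendant set and vertices $1,2$ are identified in a single pass over the (sorted) degree sequence, the running time is $O(m)$.
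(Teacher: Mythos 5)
Your proposal is correct and follows essentially the same route as the paper: the lower bound $rc(G) \geq \max\{p,3\}$ comes from Observation \ref{obsPendant} together with Theorem \ref{thmThreshod2}, and the matching upper bound comes from verifying that Algorithm \ref{algColourThresholdGraph2} produces a rainbow colouring using at most $\max\{p,3\}$ colours in $O(m)$ time. The only difference is that the paper dismisses the verification as ``easy to check, by considering all pairs of non-adjacent vertices,'' whereas you carry out that case analysis explicitly (including the supporting facts that pendant vertices attach only to vertex $1$, that $d_2 \geq 2$, and that $n-p \geq 3$ when $G$ is not a star), which is a faithful filling-in of the same argument rather than a different approach.
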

\begin{proof}
It is easy to check, by considering all pairs of non-adjacent vertices, that Algorithm \ref{algColourThresholdGraph2} indeed produces a rainbow colouring of $G$. It is also evident that it uses at most $\max \{p, 3\}$ colours. By Observation \ref{obsPendant} and Theorem \ref{thmThreshod2} , it follows hat $rc(G) \geq \max\{p, 3\}$. Hence $rc(G) = \max\{p, 3\}$ and hence the rainbow colouring produced by Algorithm \ref{algColourThresholdGraph2} is optimal. Further, since Algorithm \ref{algColourThresholdGraph2} visits each edge only once, its time complexity is $O(m)$.
\end{proof}


\begin{algorithm}
\caption{\sc ColourThresholdGraph}
\label{algColourThresholdGraph}
\begin{algorithmic}[1]
\REQUIRE 
	$G([n],E)$, a connected threshold graph with $d_1 \geq \cdots \geq d_n$, where $d_i$ is the degree of vertex $i$.
\ENSURE
	An optimal rainbow colouring $C_G : E(G) \into \{0, \ldots, rc(G)-1 \}$ of $G$.

\STATE $k = \min\{i : 1 \leq  i \leq n, \, d_i \leq i-1 \}$

\IF{$\sum_{i =k}^{n}2^{-d_i} \leq 1$}
	\STATE $C_G =$ {\sc ColourThresholdGraph-Case1}($G$)
\ELSE
	\STATE $C_G =$ {\sc ColourThresholdGraph-Case2}($G$)
\ENDIF
\RETURN $C_G$

\end{algorithmic}
\end{algorithm}

Combining Theorem \ref{thmThreshod2} and Theorem \ref{thmThreshod3p}, we get a complete characterisation for threshold graphs whose rainbow connection number is $k$, based on its degree sequence alone.  Further we can find the optimally rainbow colour every threshold graph in linear-time.

\begin{corollary}
\label{corThresholdChar}
Let	$G([n],E)$, be a connected threshold graph with $d_1 \geq \cdots \geq d_n$, where $d_i$ is the degree of vertex $i$. Then, 
\begin{equation}
	rc(G) =
	\begin{cases}
		1, 				& \textnormal{if $G$ is a clique} \\
		2, 				& \textnormal{if $G$ is not a clique and } \sum_{i =k}^{n}2^{-d_i} \leq 1 \\
		\max\{3, p\}, 	& \textnormal{otherwise,}
	\end{cases}
\end{equation} 
where $k = \min\{i : 1 \leq  i \leq n, \, d_i \leq i-1 \}$ and $p = |\{i : 1 \leq i \leq n, \, d_i =1 \} |$.

Further, Algorithm \ref{algColourThresholdGraph} {\sc (ColourThresholdGraph)} gives an optimal rainbow colouring of $G$ in $O(m)$ time.
\end{corollary}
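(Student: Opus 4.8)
The plan is to obtain the corollary by stitching together Theorem \ref{thmThreshod2} and Theorem \ref{thmThreshod3p}, so the first task is to verify that the three stated cases form a consistent and exhaustive partition of the connected threshold graphs. I would begin by noting that a clique always satisfies Inequality (\ref{eqnTrheshold}): in $K_n$ every vertex has degree $n-1$, so $k = n$ and the left-hand side is the single term $2^{-(n-1)} \leq 1$. Hence the case ``$G$ is a clique'' and the case ``$G$ is not a clique and $\sum_{i=k}^n 2^{-d_i} \leq 1$'' are disjoint, and their complement, the ``otherwise'' branch, is exactly ``$G$ is not a clique and $\sum_{i=k}^n 2^{-d_i} > 1$''. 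In particular, since a clique cannot violate Inequality (\ref{eqnTrheshold}), every graph violating it is a non-clique, which is precisely the hypothesis under which Theorem \ref{thmThreshod3p} applies.

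With the partition settled, each value follows from a previous result. When $G$ is a clique, the clique subcase inside the proof of Theorem \ref{thmThreshod2} already shows that Algorithm \ref{algColourThresholdGraph1} colours every edge with the single colour $0$, so $rc(G) = 1$. When $G$ is not a clique but satisfies Inequality (\ref{eqnTrheshold}), Theorem \ref{thmThreshod2} gives $rc(G) \leq 2$, while a non-clique has $diam(G) \geq 2$ and hence $rc(G) \geq 2$ by Observation \ref{obsDiameter}; together these give $rc(G) = 2$. In the remaining ``otherwise'' case, Theorem \ref{thmThreshod3p} yields $rc(G) = \max\{3, p\}$ directly, where $p$ is the number of pendant (degree-one) vertices, matching the corollary's definition of $p$. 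This proves the displayed formula.

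For the algorithmic statement I would trace Algorithm \ref{algColourThresholdGraph}: it computes $k$, tests Inequality (\ref{eqnTrheshold}), and dispatches to {\sc ColourThresholdGraph-Case1} when the inequality holds and to {\sc ColourThresholdGraph-Case2} otherwise. Each subroutine is proven to return an optimal rainbow colouring in its governing theorem, and by the partition above the branch chosen always coincides with the branch whose theorem applies, so the returned colouring is optimal. Both subroutines run in $O(m)$ time, and since $G$ is connected we have $m \geq n-1$, so the $O(n)$ work of computing $k$ is absorbed into the $O(m)$ bound.

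The one point that needs care is testing Inequality (\ref{eqnTrheshold}) within the $O(m)$ budget, because the degrees $d_i$ may be as large as $n-1$ and the terms $2^{-d_i}$ cannot be kept as ordinary numbers. The clean resolution is to fold the test into the prefix-free code construction of Observation \ref{obsKraft}: that construction succeeds exactly when the prescribed lengths satisfy Kraft's Inequality, otherwise detecting saturation of the binary tree, and it runs in $O(\sum_i d_i) = O(m)$ time. Thus the test costs no more than the construction it guards, and the overall algorithm stays within $O(m)$.
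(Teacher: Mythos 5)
Your proof is correct and takes essentially the same route as the paper, whose entire argument is to combine Theorem \ref{thmThreshod2} and Theorem \ref{thmThreshod3p}. You additionally make explicit two points the paper glosses over---that a clique always satisfies Inequality (\ref{eqnTrheshold}), so the ``otherwise'' branch coincides exactly with the hypothesis of Theorem \ref{thmThreshod3p}, and that the inequality test in Algorithm \ref{algColourThresholdGraph} can be folded into the prefix-free code construction of Observation \ref{obsKraft} to keep the whole algorithm within $O(m)$ time---and both of these refinements are sound.
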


\bibliographystyle{plain}
\bibliography{rainbow}

\end{document}